\numberwithin{equation}{section}
\newtheorem{thm}{Theorem}[section]
\newtheorem{lem}[thm]{Lemma}
\newtheorem{rem}[thm]{Remark}
\newtheorem{cor}[thm]{Corollary}
\newtheorem{prop}[thm]{Proposition}
\newtheorem{asm}[thm]{Assumption}
\newcommand{\kkernel}{k}
\begin{document}
\title{What if we knew what the future brings?\\ Optimal investment for
  a frontrunner with price impact} \thanks{P. Bank is supported in
part by the
GIF Grant 1489-304.6/2019.\\
Y. Dolinsky is supported in part by the GIF Grant 1489-304.6/2019 and
the ISF grant
230/21.\\
M. Rásonyi thanks for the support of the “Lendület” grant LP 2015-6 of
the Hungarian Academy of Sciences.  }

   \author{Peter Bank \address{
 Department of Mathematics, TU Berlin. \\
 e.mail: bank@math.tu-berlin.de }}${}$\\
\author{Yan Dolinsky \address{
 Department of Statistics, Hebrew University of Jerusalem.\\
 e.mail: yan.dolinsky@mail.huji.ac.il}}
 \author{Miklós Rásonyi \address{
 Alfr\'ed Rényi Institute of Mathematics and E\"otv\"os Lor\'and University, Budapest.\\
 e.mail: rasonyi@renyi.hu}}
  \date{\today}
\maketitle
\begin{abstract}
In this paper we study optimal investment when the investor can peek
some time units into the future, but cannot fully take advantage of
this knowledge because of quadratic transaction costs. In the
Bachelier setting with exponential utility, we give an explicit
solution to this control problem with intrinsically
infinite-dimensional memory. This is made possible by solving the dual
problem where we make use of the theory of Gaussian Volterra integral equations.
\end{abstract}
\begin{description}
\item[Mathematical Subject Classification (2010)] 91G10, 91B16
\item[Keywords] Gaussian Volterra integral equation, inside
  information, price impact, exponential utility, optimal investment
\end{description}

\keywords{}
 \maketitle \markboth{}{}
\renewcommand{\theequation}{\arabic{section}.\arabic{equation}}
\pagenumbering{arabic}

\section{Introduction}\label{sec:intro}

Optimal investment is a tremendously rich source of mathematical
challenges in stochastic control theory. The key driver in this
problem is the tradeoff between risk and return. Thus, information on
the investment opportunities is playing a role which is as important
for the mathematical theory as it is in practice where investors go at
great lengths to secure even the slightest advantage in knowledge.  So
it is no wonder that insider information has been widely studied in
the literature; see, for instance,
\cite{PK:96,APS:98,I:03,ABS:03,ADI:06} where an investor obtains extra
information about the stock price evolution at some fixed point in
time. By contrast to these studies, the present paper takes a more
dynamic view on information gathering and affords the investor the
opportunity to \emph{continually} peek $\Delta$ units of time into the
future.  Closest to such an investor in reality may be high-frequency
traders (``frontrunners'') that get access to order flow information
earlier or are able process it faster than their competition. To the
best of our knowledge, this paper is the first continuous-time
stochastic control paper with such a feature, apart from the optimal
stopping problem of \cite{BZ:16}.

Of course, perfect knowledge about future stock prices easily lets
optimal investment problems degenerate and so it is of great interest
to understand how market mechanisms may curb an investor's ability to
take advantage of this extra information. A most satisfactory approach
from an economic point of view is the equilibrium approach due to Kyle
\cite{Kyle:85} where the insider knows the terminal stock price right
from the start and internalizes the impact of her orders on market
prices. Generalizations of this approach are challenging; see
\cite{BackBaruch:04}, \cite{Cetin:18}, \cite{BackEtal:21} and the
references therein.  For dynamic information advantages in this
context, we refer to \cite{CampiCetinDanilova:11,
  CampiCetinDanilova:13} who consider an insider receiving a dynamic
signal on, respectively, the terminal asset price or the traded firm's
default time. These models, however, do not get close to addressing
the intrinsically infinite-dimensional information structure of our
peek-ahead setting. Fortunately, also the much simpler market impact
model of \cite{AlmgrenChriss:01} that just imposes quadratic
transaction costs for the investor turns out to be sufficient friction
to make the optimal investment problem viable. In an insider model
where additional information is obtained just once,
\cite{AnkirchnerEtAl:2016} use such a friction for optimal portfolio
liquidation. A combination between Kyle's equilibrium setting and
quadratic price impact costs is solved in \cite{Donnelly:21}. With
peek-ahead information as in the present paper, \cite{DZ:21} study
super--replication, albeit in a discretized version of the Bachelier
model.

It is in the continuous-time Bachelier model that the present paper
provides its main result, namely the explicit optimal investment
strategy for an exponential utility maximizer who knows about future
prices $\Delta$ time units before they materialize in the market, but
cannot freely take advantage of her extra knowledge due to quadratic
transaction costs. The optimal policy turns out to be a combination of
two trading incentives. On the one hand, there is the urge to trade
towards the optimal frictionless position given by the well-known
Merton ratio. On the other hand, there is the desire to take advantage
of the next stock price moves and this contributes to the optimal
turnover rate through an explicitly given average of stock prices over
the window of length $\Delta$ on which our investor has extra
information.

Due to its peek-ahead feature, our optimal control problem can be
viewed as a contribution to pathwise stochastic control. A closely
related work is \cite{BM:07} where the authors studied a hidden
stochastic volatility model with a controller who has full information
on the extra noise.  The theory of delayed or partial information also
shares the infinite-dimensional pathwise control issues we need to
address here; see the recent papers \cite{SZ:19,S:19} and the
references therein. Finally, our control theoretic setting is also
related to models discussed in the monograph \cite{Gozzi:17}.

Instead of dynamic programming (which would be challenging in this
infinite memory setting; cf. \cite{Gozzi:17}), our methodology is
based on duality. For the case of exponential utility and quadratic
transaction costs, this theory is developed with flexible information
flow in great generality in an essentially self-contained appendix. It
shows that the primal optimal control is determined by the conditional
expectation of the terminal stock price under the dual optimal
probability measure. For the Brownian framework that we focus on in
the main body of the paper, we derive a particularly convenient
representation of the dual target functional which leads to
\emph{deterministic} variational problems. These problems can be
solved explicitly, and results from the theory of Gaussian Volterra
integral equations (\cite{H:68,HH:93}) allow us both to construct the
solution to the dual problem and to compute the primal optimal
strategy. These Gaussian Volterra integral equations also occur in
\cite{Cheridito:01} albeit in the rather different context of (no)
arbitrage criteria in fractionally perturbed financial models.

In Section~\ref{sec:mainresult} we specify our model and formulate and
interpret our main result.  Section~\ref{sec:proof} contains the proof
of the main result and the appendix~\ref{appendix} presents the
duality results necessary for these developments.

\section{Problem Formulation and Main Result}\label{sec:mainresult}
We consider an investor who knows about market movements some time
before they happen, but cannot arbitrarily exploit them due to market
frictions. Specifically, apart from a riskless savings account bearing
zero interest (for simplicity), the investor has the opportunity to
trade in a risky asset with Bachelier price dynamics
\begin{align}
S_t=s_0+\mu t+\sigma W_t, \quad t \geq 0,
\end{align}
where $s_0 \in \mathbb R$ is the initial asset price,
$\mu\in\mathbb{R}$ is the constant drift, $\sigma>0$ is the
constant volatility and $W$ is a one-dimensional Brownian motion
supported on a complete probability space
$(\Omega, \mathcal{F}, \mathbb P)$. Rather than having access to just
the natural augmented filtration $(\mathcal{F}^S_t)_{t \geq 0}$ for
making investment decisions, we assume that our investor can peek
$\Delta \in [0,\infty)$ time units into the future, and so her information
flow is given by the filtration
$$
\mathcal G^{\Delta}_t:=\mathcal F^S_{t+\Delta}, \quad t\geq 0.
$$
\begin{rem}
  As suggested by an anonymous referee, one could more generally
  consider a non-decreasing time shift $\tau:[0,\infty) \to [0,\infty)$ with
  $\tau(t) \geq t$ to model time-varying ability to peek ahead. To
  keep the exposition here as simple as possible, we leave this
  extension of our model as a topic for future research.
\end{rem}
Taking advantage of the inside information is impeded by the
investor's adverse market impact. Following
\cite{AlmgrenChriss:01}, we model this impact in a temporary linear
form and, thus, when at time $t$ the investor turns over her position $\Phi_t$ at
the rate $\phi_t=\dot{\Phi}_t$ the execution price is $S_t + \frac{\Lambda}{2}
\phi_t$ for some constant $\Lambda >0$. As a result, the profits and
losses from trading are given by
\begin{align}\label{eq:pnl}
V^{\Phi_0,\phi}_T:=\Phi_0(S_T-S_0)+\int_{0}^T \phi_t(S_T-S_t)dt-\frac{\Lambda}{2} \int_{0}^T \phi^2_t dt,
\end{align}
where, for convenience, we assume that the investor marks to market her position
$\Phi_T = \Phi_0+\int_0^T \phi_t dt$ in the risky asset that she has acquired by time
$T>0$.

Fixing a time horizon $T>0$, the natural class of admissible strategies
is then
\begin{align}
\mathcal A^\Delta:=\left\{\phi=(\phi_t)_{t\in [0,T]}: \ \phi \text{ is } \
  \mathcal G^{\Delta}\text{-optional with } \int_{0}^T \phi^2_t dt<\infty
  \ \text{ a.s.}\right \}.
\end{align}
The investor’s preferences are described by an exponential utility function
$$u(x):=-\exp(-\alpha x), \quad x\in\mathbb R,$$
with constant absolute risk aversion parameter $\alpha>0$, and her goal
is thus to
\begin{align}\label{problem}
\text{Maximize } \mathbb E\left[u(V^{\Phi_0,\phi}_T)\right]=\mathbb
  E\left[-\exp\left(-\alpha V^{\Phi_0,\phi}_T\right)\right]\text{ over
  } {\phi\in\mathcal A^\Delta}.
\end{align}

The paper's main result is the following solution to this optimization
problem:
\begin{thm}\label{thm2.1}
  In the utility maximization problem~\eqref{problem}, the investor's
  optimal turnover rate $\hat{\phi}_t$ at time $t \in [0,T]$ depends
  on the risk-liquidity ratio
\begin{align}
  \label{eq:22}
  \rho:=\frac{\alpha \sigma^2}{\Lambda},
\end{align}
  on the position $\hat{\Phi}_t=\Phi_0+\int_0^t \hat{\phi}_sds$
  acquired so far and the privileged information on the next stock prices
  $(S_{t+s})_{s \in [0,\Delta]}$ in the feedback
  form
  \begin{align}\label{eq:feedback}
\hat{\phi}_t=&
               \frac{1}{\Lambda}\left(\bar{S}^\Delta_t-S_t\right)
                          +\frac{\Upsilon^\Delta(T-t)}{\Delta}
                            \left(\frac{\mu}{\alpha\sigma^2} -\hat{\Phi}_t\right),
  \end{align}
  where $\bar{S}^\Delta$ is the stock price average given by
  \begin{align}
    \label{eq:24}
    \bar{S}^\Delta_t := \left(1-\Upsilon^\Delta(T-t)\right)S_{(t+\Delta) \wedge T}
+\Upsilon^\Delta(T-t)\frac{1}{\Delta}
\int_0^{\Delta} S_{t+s}ds
  \end{align}
 with $\Upsilon^\Delta(\tau)=\Delta\sqrt{\rho}\tanh(\sqrt{\rho}(\tau-\Delta)^{+})/(1+\Delta\sqrt{\rho}\tanh(\sqrt{\rho}(\tau-\Delta)^+))$.
The maximal utility this policy generates is
  \begin{align}\label{eq:primalvalue}
  \max_{\phi \in \mathcal A^\Delta} \mathbb  E&\left[-\exp\left(-\alpha\sigma
                                                V^{\Phi_0,\phi}_T\right)\right] =\\\nonumber
    &-\exp\left(
      \frac{\alpha\Lambda\sqrt{\rho}}{2\coth\left(\sqrt{\rho}T\right)}{\left(\Phi_0-\frac{\mu}{\alpha
      \sigma^2}\right)^2}-\frac{1}{2}\frac{\mu^2}{\sigma^2}T\right)\\\nonumber&\qquad\cdot \exp\left(-\frac{1}{2}\int_{0}^T \frac{(s\wedge\Delta)\rho}{1+(s\wedge\Delta) \sqrt{\rho}\tanh\left(\sqrt{\rho}(T-s)\right)}ds\right).
  \end{align}
\end{thm}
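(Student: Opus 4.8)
The plan is to solve the problem by convex duality, using the general exponential-utility theory of Appendix~\ref{appendix} and then exploiting the Gaussian structure of the Bachelier model to reduce the dual problem to a deterministic variational problem that can be solved in closed form. After rewriting the P\&L in \eqref{eq:pnl} by Fubini as $V^{\Phi_0,\phi}_T=\int_0^T\Phi_t\,dS_t-\frac\Lambda2\int_0^T\phi_t^2\,dt$, I would invoke the duality result: the primal value of \eqref{problem} equals $-\exp(-\inf_{\mathbb Q}\mathcal D(\mathbb Q))$, where $\mathbb Q$ runs over probability measures absolutely continuous with respect to $\mathbb P$ with finite relative entropy $H(\mathbb Q\mid\mathbb P)$, and — after maximizing the inner linear–quadratic problem over $\phi$ pointwise, which identifies the primal optimizer as $\hat\phi_t=\tfrac1\Lambda\big(\mathbb E_{\hat{\mathbb Q}}[S_T\mid\mathcal G^\Delta_t]-S_t\big)$ for the dual optimizer $\hat{\mathbb Q}$ — the dual target is
\[
\mathcal D(\mathbb Q)=H(\mathbb Q\mid\mathbb P)+\alpha\Phi_0\,\mathbb E_{\mathbb Q}[S_T-S_0]+\frac{\alpha}{2\Lambda}\,\mathbb E_{\mathbb Q}\!\left[\int_0^T\big(\mathbb E_{\mathbb Q}[\,S_T-S_t\mid\mathcal G^\Delta_t\,]\big)^2dt\right].
\]
Proving Theorem~\ref{thm2.1} then amounts to (i) constructing $\hat{\mathbb Q}$, (ii) computing the above conditional expectation to verify the feedback form \eqref{eq:feedback} and the admissibility $\hat\phi\in\mathcal A^\Delta$, and (iii) evaluating $\mathcal D(\hat{\mathbb Q})$ to obtain \eqref{eq:primalvalue}.

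Next I would exploit the Gaussian structure. Since $\mathcal G^\Delta$ is a $\Delta$-ahead shift of the Brownian filtration, one encodes a candidate dual measure by a Girsanov drift for $W$ of Volterra type, $\psi_t=h_t+\int_0^t\kkernel(t,s)\,dW^{\mathbb Q}_s$ with deterministic $h$ and kernel $\kkernel$ supported on $[0,T]$ (changing $\mathbb Q$ on $(T,T+\Delta]$ only raises $H$ without affecting the other terms). Under such a $\mathbb Q$ the process $S$ stays Gaussian, $\mathbb E_{\mathbb Q}[S_T-S_t\mid\mathcal G^\Delta_t]$ is an explicit affine functional of $(W^{\mathbb Q}_u)_{u\le(t+\Delta)\wedge T}$, and $H(\mathbb Q\mid\mathbb P)=\frac12\int_0^T\!\big(h_t^2+\int_0^t\kkernel(t,s)^2\,ds\big)dt$, so $\mathcal D(\mathbb Q)$ becomes a strictly convex quadratic functional of $(h,\kkernel)$; one need not prove that Gaussian measures are optimal, since a matching primal strategy together with the weak-duality inequality below will close the gap a posteriori. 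The first-order conditions of this deterministic problem couple times differing by $\Delta$ and hence form a linear Volterra integral equation for the optimal kernel rather than an ODE; away from the $\Delta$-neighbourhoods of $0$ and $T$ it collapses to $y''=\rho y$, producing the $\tanh(\sqrt\rho\,\cdot)$ and $\coth(\sqrt\rho\,\cdot)$ appearing in $\Upsilon^\Delta$ and \eqref{eq:primalvalue}, while the $\Delta$-coupling generates the factors $s\wedge\Delta$ and the averaging $\frac1\Delta\int_0^\Delta S_{t+s}\,ds$ in \eqref{eq:24}. At this point the theory of Gaussian Volterra integral equations \cite{H:68,HH:93} is brought in to show that the explicit candidate kernel is canonical, so that it genuinely defines an equivalent $\hat{\mathbb Q}$ whose density is a true martingale with finite entropy.

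With $\hat{\mathbb Q}$ in hand the remaining steps are verification. I would compute $\mathbb E_{\hat{\mathbb Q}}[S_T-S_t\mid\mathcal G^\Delta_t]$ and recognize it as $(\bar{S}^\Delta_t-S_t)+\frac{\Lambda\Upsilon^\Delta(T-t)}{\Delta}\big(\frac{\mu}{\alpha\sigma^2}-\hat\Phi_t\big)$, where $\hat\Phi$ is characterized by the induced linear integral equation $\hat\Phi_t=\Phi_0+\int_0^t\hat\phi_s\,ds$; this establishes \eqref{eq:feedback}. Admissibility is then immediate: $\mathcal G^\Delta$-optionality is manifest from \eqref{eq:feedback} (it uses only $(S_{t+s})_{s\in[0,\Delta]}$ and $\hat\Phi_t$), and $\int_0^T\hat\phi_t^2\,dt<\infty$ a.s.\ follows from the explicit Gaussian expressions. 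Finally I would evaluate $\mathcal D(\hat{\mathbb Q})$ and simplify it to the exponent in \eqref{eq:primalvalue}. Optimality follows by weak duality: for any admissible $\phi$ and any $\mathbb Q$, the Gibbs inequality yields $\mathbb E[-\exp(-\alpha V^{\Phi_0,\phi}_T)]\le-\exp(-\mathcal D(\mathbb Q))$, so the fact that $\hat\phi$ attains $-\exp(-\mathcal D(\hat{\mathbb Q}))$ forces equalities throughout, identifying both the optimal strategy and the optimal value.

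The main obstacle is the combination of the second and third steps: recognizing that the stationarity conditions of the infinite-memory dual problem form a Gaussian Volterra integral equation, solving it in closed form, and then invoking \cite{H:68,HH:93} to certify that this formal solution corresponds to a bona fide probability measure — which is exactly what removes the duality gap and makes $\hat\phi$ admissible. By contrast, the long but essentially routine part is the bookkeeping in the last step, in particular disentangling the $S_{(t+\Delta)\wedge T}$ and $\frac1\Delta\int_0^\Delta S_{t+s}\,ds$ contributions in $\bar{S}^\Delta$ and checking the precise formula for $\Upsilon^\Delta$.
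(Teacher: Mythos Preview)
Your overall architecture---duality from the appendix, Gaussian/Volterra parametrization $\theta_t=a_t+\int_0^t l_{t,s}\,dW^{\mathbb Q}_s$ of the Girsanov drift, reduction to a deterministic variational problem, then verification---is exactly the paper's. But you have missed the key structural insight and partly misplaced the role of the Volterra theory.

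The crucial step you do not see is that, after plugging the It\^o representation of $\theta$ into the dual target and applying It\^o's isometry together with Fubini, the functional \emph{decouples completely}: it splits into a problem in $a\in L^2([0,T])$ alone plus, for each fixed $s\in[0,T]$, an independent problem in $l_{\cdot,s}\in L^2([s,T])$ whose only $\Delta$-dependence is a boundary penalty $\tfrac{s\wedge\Delta}{2\Lambda}\bigl(1-\int_s^T l_{u,s}\,du\bigr)^2$. There is no coupling ``between times differing by $\Delta$'' in the first-order conditions; each piece is a standard calculus-of-variations problem whose Euler--Lagrange equation is the ODE $\ddot g=\rho g$ on all of $[s,T]$, not merely ``away from the $\Delta$-neighbourhoods''. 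This decoupling (Lemma~\ref{lem:1} and Lemma~\ref{lem:2}) is what produces the closed forms for $\hat a$ and $\hat l$ without solving any integral equation. The Gaussian Volterra theory of \cite{H:68,HH:93} enters at a different point: once $\hat a$ and $\hat l$ are known, one needs to solve the \emph{fixed-point} equation $\hat W=W+\int_0^{\cdot}\hat\theta(\hat W)_r\,dr$ to construct the measure $\hat{\mathbb Q}$ and to express $\hat W$ (hence the conditional expectation) back in terms of $W$; this is where the resolvent kernel $\hat\kkernel$ of \eqref{eq:6}--\eqref{eq:hatkappa} appears. Finally, for the feedback form you propose to compute $\mathbb E_{\hat{\mathbb Q}}[S_T\mid\mathcal G^\Delta_t]$ for general $t$ directly; the paper instead establishes a dynamic-programming lemma (Lemma~\ref{lem:policy}) reducing the verification to $t=0$, which avoids carrying the resolvent-kernel integrals through a running time parameter and is what makes the identification with $\bar S^\Delta_t$ and $\Upsilon^\Delta$ manageable.
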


Our feedback description~\eqref{eq:feedback} can be interpreted as
follows: First, without privileged information, i.e. for $\Delta=0$,
we have $\bar{S}^\Delta_t=S_t$ and, therefore, the first term
in~\eqref{eq:feedback} vanishes leaving us with the optimal policy
\begin{align}
         \hat{\phi}_t=\sqrt{\rho}\tanh(\sqrt{\rho}(T-t))
                            \left(\frac{\mu}{\alpha\sigma^2}
  -\hat\Phi_t\right), \quad t \in [0,T].
\end{align}
So the uninformed agent will trade towards the optimal position
$\mu/(\alpha\sigma^2)$ well known from the frictionless Merton
problem. Due to the impact costs, she does so with finite urgency
$\sqrt{\rho}\tanh(\sqrt{\rho}(T-t))$. With a long time to go, this
urgency is essentially $\sqrt{\rho}$ and thus dictated by the
risk/liquidity ration $\rho=\alpha \sigma^2/\Lambda$;
as $t$ approaches the time horizon $T$, the urgency vanishes because,
towards the end, position improvements have an ever shorter time to
yield risk premia but the investor still has to pay the same impact
costs that obtain at the start of trading.  \footnote{We will prove
  this result along the way to our main result with future knowledge
  $\Delta>0$. Let us note though that, for $\Delta=0$, a closely
  related result is obtained by dynamic programming techniques in
  \cite{Schied_2007} who, in contrast to our setting, impose a
  liquidation constraint $\hat{\Phi}_T=0$ and assume $\mu=0$.}

 For the informed agent, i.e. for $\Delta>0$, the desire to be close to
the Merton ratio persists, but the urgency reduces to
\begin{align}
\frac{\Upsilon^\Delta(T-t)}{\Delta}=\frac{\sqrt{\rho}\tanh(\sqrt{\rho}(T-t-\Delta)^{+})}{1+\Delta\sqrt{\rho}\tanh(\sqrt{\rho}(T-t-\Delta)^+)},
\end{align}
 leaving ``some air'' to take advantage of the knowledge on future
price movements. This is done by averaging out in~\eqref{eq:24} the
latest relevant stock price available to the investor,
$S_{(t+\Delta)\wedge T},$ with the mean stock price
$\frac{1}{\Delta} \int_0^{\Delta} S_{t+s}ds$ to be realized over the
next $\Delta$ time units in an effort to assess the earnings potential
over today's stock price~$S_t$. Put into relation with the impact costs $\Lambda$, this yields
the second contribution $(\bar{S}^\Delta_t-S_t)/\Lambda$ to the optimal turnover rate. The weight that this
assessment of earnings assigns to the average stock prices is given by
$\Upsilon^\Delta(T-t) \in [0,1]$; it is about
$\Delta \sqrt{\rho}/(1+\Delta\sqrt{\rho})$ when there is still a lot
of time to go, but vanishes completely as soon as $T-t\leq \Delta$,
i.e. as soon as full knowledge of stock price movements over the
relevant time span $[0,T]$ is attained. In this terminal regime also
the ambition to be close to the Merton ratio is wiped out and the
investor just chases the earning potential $S_T-S_t$ from the stock,
of course still in a tradeoff against the liquidity costs $\Lambda$;
this latter effect is also immediate from separate, pointwise
optimization over $\phi_t$ in the representation~\eqref{eq:pnl} of
profits and losses (which leads to $\phi_t^*=(S_T-S_t)/\Lambda$, $t
\in [0,T]$, an admissible
strategy as soon as $S_T$ becomes known).
\begin{figure}
\includegraphics[width=0.3\textwidth]{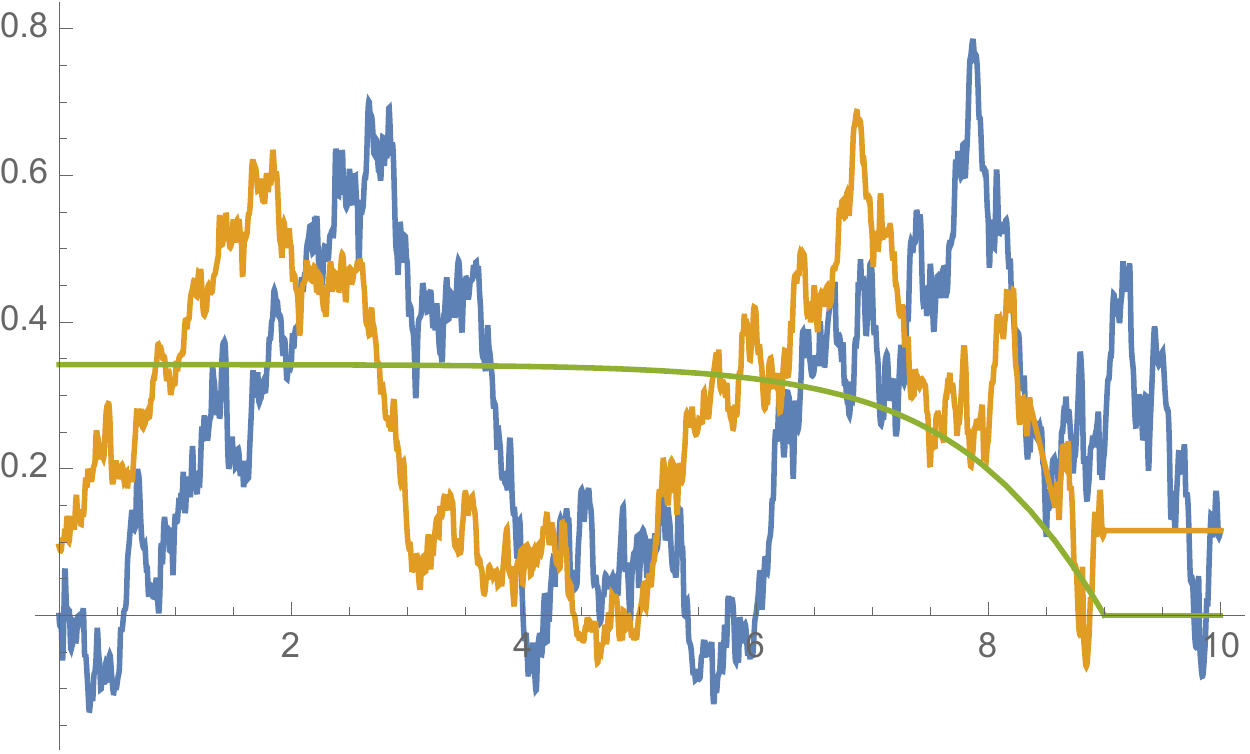}
\includegraphics[width=0.3\textwidth]{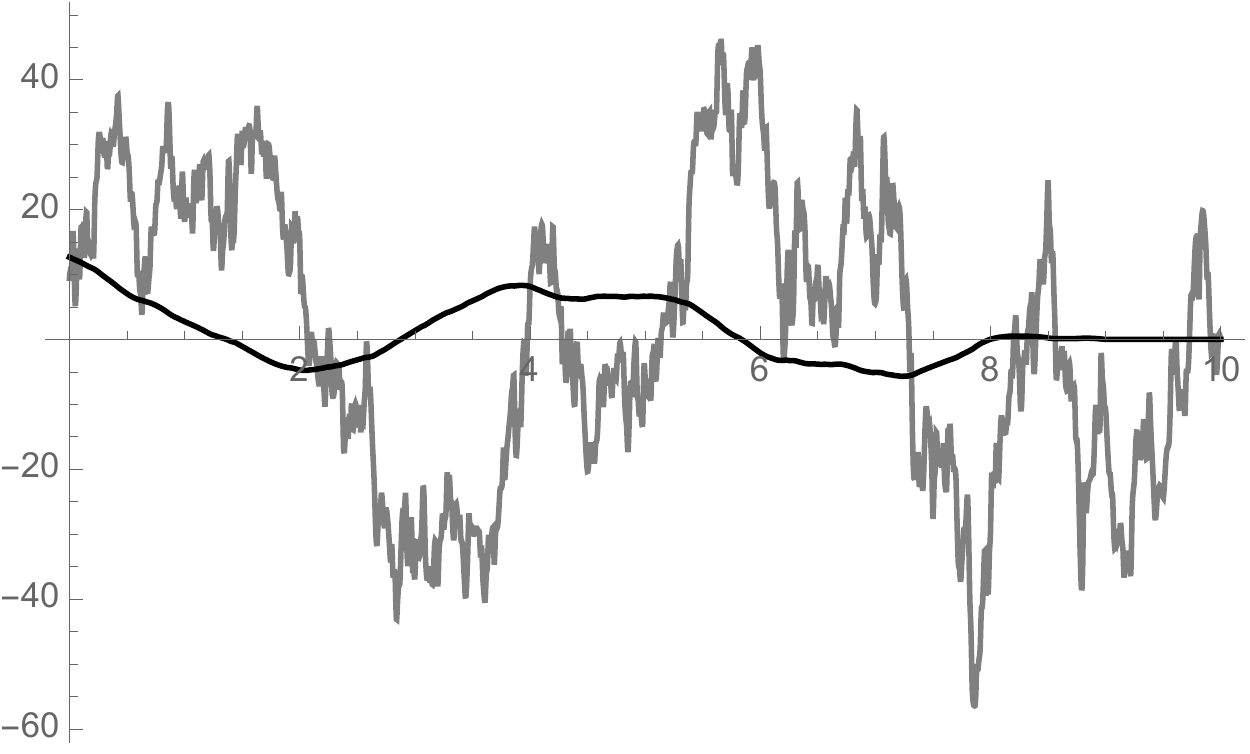}
\includegraphics[width=0.3\textwidth]{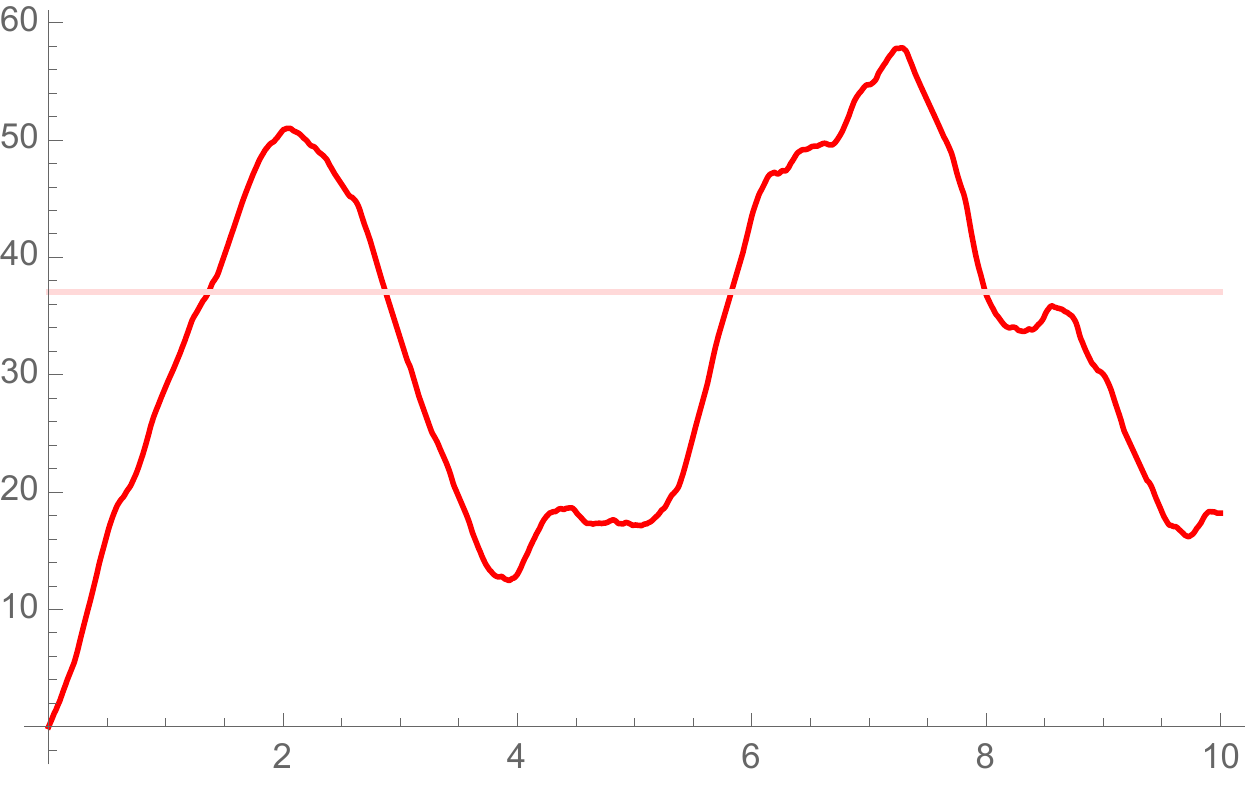}
\caption{The first of these illustrations shows an evolution of the stock price
  $S$ (blue), the corresponding average $S^\Delta$
  (orange) along with the underlying weight $\Upsilon^\Delta$; the second shows
  the resulting trading rates due to ``frontrunning'' (grey) and
  due to tracking the Merton portfolio (black); the third display
  shows the ensuing stock position $\Phi$ (red) together with the
  Merton ration $\mu/(\alpha \sigma^2$ (light red). Parameters where
  chosen as $s_0=0$, $\mu=.1$, $\sigma=.3$, $T=10$, $\Delta=1$,
  $\alpha=.03$, $\Phi_0=0$, $\Lambda = .01$.}
\end{figure}

The monetary value of being able to peek ahead by $\Delta$ is best described by
the certainty equivalent
  \begin{align}\label{eq:7}
c(\Delta) &= -\frac{1}{\alpha} \log\frac{ \max_{\phi \in \mathcal A^{\Delta}} \mathbb  E\left[-\exp\left(-\alpha
                                                V^{\Phi_0,\phi}_T\right)\right]}{ \max_{\phi \in \mathcal A^{0}} \mathbb  E\left[-\exp\left(-\alpha
                                                V^{\Phi_0,\phi}_T\right)\right]}\\
    &=\frac{1}{2\alpha}  \int_{0}^T
      \frac{(s\wedge\Delta)\rho}{1+(s\wedge\Delta)
      \sqrt{\rho}\tanh\left(\sqrt{\rho}(T-s)\right)}ds
  \end{align}
determined by comparing the utility attainable for an informed investor
(with admissible strategy set $\mathcal A^\Delta$) and an uninformed one (who
is confined to strategies from the smaller class $\mathcal A^0$)\footnote{The
      integral in~\eqref{eq:7} can be computed explicitly, but the resulting
  formulae turn out to be not more informative than the above integral
  and are therefore omitted.}.
Interestingly, the certainty equivalent does not depend on
the stock's risk premium $\mu$, but is determined by the
risk/liquidity ratio $\rho=\alpha \sigma^2/\Lambda$, the investor's
time horizon $T$ and the time units $\Delta$ she can look
ahead. Except for a period of length $\Delta$ and with a lot of time
to go, it accrues at about the rate
$\Delta \rho/(2(1+\Delta \sqrt{\rho}))$ which increases with $\Delta$
to the upper bound $\sqrt{\rho}/2$, revealing again the curb frictions
put on the earning potential of even extreme information advantages.

The proof of Theorem~\ref{thm2.1} is carried out in the next
section. It is obtained by solving the dual problem to
\begin{align}
  \intertext{$\qquad$Minimze}\label{eq:dualproblem}
  \mathbb E_{\mathbb Q}\left[\Phi_0(S_T-S_0) +\frac{1}{2\Lambda}\int_{0}^T\left|
\mathbb E_{\mathbb Q}\left[S_T\middle|\mathcal
  G^{\Delta}_t\right]-S_t\right|^2 dt\right]+\frac{1}{\alpha}
  \mathbb E_{\mathbb Q}\left[\log \frac{d\mathbb Q}{d\mathbb P}\right]
  \intertext{$\qquad$over $\mathbb Q \sim \mathbb P$ with finite relative entropy
$\mathbb E_{\mathbb Q}\left[\log \frac{d\mathbb Q}{d\mathbb
    P}\right]<\infty$.}
\end{align}
The corresponding duality theory holds true beyond the Brownian
framework specified here and is developed in a self-contained manner
in the Appendix~\ref{appendix} as a second key contribution of our
paper.



\section{Proof of Theorem~\ref{thm2.1}}\label{sec:proof}
Let us first note that it suffices to treat the case
\begin{align}\label{eq:wlog}
  S=W\text{, i.e., without loss of generality $s_0=0$, $\sigma=1$, $\mu=0$.}
\end{align}
Indeed, by passing from $\alpha$, $\Lambda$, $\mu$ to, respectively,
$\alpha'=\alpha\sigma$, $\Lambda'=\Lambda/\sigma$, $\mu'=\mu/\sigma$,
the utility with $\sigma'=1$ obtained from a given strategy will
coincide with the one obtained from this strategy under the original
parameters. Moreover, rewriting the expected utility under
$\mathbb P' \sim \mathbb P$ with density
$$\frac{d{\hat {\mathbb P'}}}{d\mathbb P}{|\mathcal F^W_T}:=\exp\left(-\mu'
  W_T-\frac{1}{2}\mu'^2 T\right),$$ under which
$W'_t=W_t+\mu' t$, $t \geq 0$, is a driftless Brownian motion, the expected
utilities under $\mathbb P$ coincide, up to the factor $\exp\left(\frac{1}{2}\mu'^2 T\right)$, with those under $\mathbb P'$ if we start with
$\Phi_0'=\Phi_0-{\mu}/({\alpha \sigma^2})$ rather than $\Phi_0$
risky assets.

The proof of Theorem \ref{thm2.1} will be accomplished via the dual
problem whose properties are summarized in the following proposition
which is an immediate consequence of the general duality results
presented in Appendix~\ref{appendix}.
\begin{prop}\label{prop2.1}
Denoting by $\mathcal Q$ the set of all probability measures $\mathbb Q\sim\mathbb P$ with finite
entropy
$$\mathbb E_{\mathbb Q}\left[\log\left(\frac{d\mathbb Q}{d\mathbb P}\right) \right]<\infty$$
relative to $\mathbb P$, we have
\begin{align}
\max_{\phi\in\mathcal A}&\left\{-\frac{1}{\alpha}\log\mathbb E\left[\exp\left(-\alpha V^{\Phi_0,\phi}_T\right)\right]\right\}\\
&=\min_{\mathbb Q\in\mathcal Q}\mathbb E_{\mathbb Q}\left[\Phi_0(S_T-S_0)+\frac{1}{\alpha}\log\left(\frac{d\mathbb Q}{d\mathbb P}\right) +\frac{1}{2\Lambda}\int_{0}^T\left|
\mathbb E_{\mathbb Q}(S_T|\mathcal G^{\Delta}_t)-S_t\right|^2 dt\right].\label{eq:1}
\end{align}
Furthermore, the minimizer $\hat{\mathbb Q}$ for the dual
problem is unique  and yields via
\begin{equation}\label{portfolio}
\hat\phi_t:=\frac{\mathbb E_{\hat{\mathbb Q}}\left[S_T|\mathcal G^{\Delta}_t\right]-S_t}{\Lambda}, \quad t\in [0,T],
\end{equation}
the unique optimal portfolio for the primal problem.
\end{prop}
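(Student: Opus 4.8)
The plan is to obtain Proposition~\ref{prop2.1} as the specialisation to the Bachelier model (with the peek-ahead filtration $\mathcal G^\Delta$) of the abstract exponential-utility/quadratic-friction duality in Appendix~\ref{appendix}; the substance is then to isolate the two model-independent identities that force the dual into the shape \eqref{eq:1}, and to verify the structural hypotheses (finiteness, convexity, coercivity) under which the abstract theorem delivers existence, uniqueness and the feedback formula \eqref{portfolio}. The first identity is the Gibbs variational principle: for every $\mathbb Q\in\mathcal Q$ and $\phi\in\mathcal A$,
\[
-\tfrac1\alpha\log\mathbb E\bigl[e^{-\alpha V^{\Phi_0,\phi}_T}\bigr]\le \mathbb E_{\mathbb Q}\bigl[V^{\Phi_0,\phi}_T\bigr]+\tfrac1\alpha\mathbb E_{\mathbb Q}\bigl[\log(d\mathbb Q/d\mathbb P)\bigr],
\]
with equality precisely when $d\mathbb Q/d\mathbb P\propto e^{-\alpha V^{\Phi_0,\phi}_T}$. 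The second is a pointwise completion of the square in $\phi$: since $\phi_t$ and $S_t$ are $\mathcal G^\Delta_t$-measurable, conditioning on $\mathcal G^\Delta_t$ and optimising in $\phi_t$ for each $(t,\omega)$ gives
\[
\mathbb E_{\mathbb Q}\!\left[\int_0^T\phi_t(S_T-S_t)\,dt-\frac{\Lambda}{2}\int_0^T\phi_t^2\,dt\right]\le\frac1{2\Lambda}\,\mathbb E_{\mathbb Q}\!\left[\int_0^T\bigl|\mathbb E_{\mathbb Q}[S_T\,|\,\mathcal G^\Delta_t]-S_t\bigr|^2dt\right],
\]
with equality exactly for $\phi_t=(\mathbb E_{\mathbb Q}[S_T|\mathcal G^\Delta_t]-S_t)/\Lambda$. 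Adding $\mathbb E_{\mathbb Q}[\Phi_0(S_T-S_0)]$ and combining the two yields the inequality ``$\le$'' in \eqref{eq:1} and simultaneously singles out \eqref{portfolio} as the only candidate for a primal maximiser attached to a given $\mathbb Q$.

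Next I would establish existence and uniqueness of the dual minimiser $\hat{\mathbb Q}$. Write $F$ for the dual objective in \eqref{eq:1}. Its entropy term is strictly convex with $\sigma(L^1,L^\infty)$-compact sublevel sets (de la Vall\'ee--Poussin and Dunford--Pettis), the linear term is dominated by the entropy, and --- the one delicate point --- the last term is convex and lower semicontinuous because of the representation $\mathbb E_{\mathbb Q}[|\mathbb E_{\mathbb Q}[S_T|\mathcal G^\Delta_t]-S_t|^2]=\sup_Y\{2\mathbb E_{\mathbb Q}[Y(S_T-S_t)]-\mathbb E_{\mathbb Q}[Y^2]\}$ over bounded $\mathcal G^\Delta_t$-measurable $Y$, which exhibits it as a supremum of maps \emph{affine} in $\mathbb Q$. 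Since $F(\mathbb P)<\infty$ by a direct computation, $F$ is a strictly convex, coercive, lower semicontinuous functional on $\mathcal Q$ with finite infimum, hence attains it at a unique $\hat{\mathbb Q}$ (which is equivalent to $\mathbb P$).

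It then remains to close the gap. Let $\hat\phi$ be the process \eqref{portfolio} built from $\hat{\mathbb Q}$. Finiteness of $F(\hat{\mathbb Q})$ forces $\mathbb E_{\hat{\mathbb Q}}[\int_0^T\hat\phi_t^2\,dt]<\infty$, and since $t\mapsto\mathbb E_{\hat{\mathbb Q}}[S_T|\mathcal G^\Delta_t]$ is a $\hat{\mathbb Q}$-martingale it has an optional version, so $\hat\phi\in\mathcal A$. Perturbing $\mathbb Q$ along curves $d\mathbb Q_\varepsilon\propto e^{\varepsilon g}\,d\hat{\mathbb Q}$ inside $\mathcal Q$ and using that the supremum above is attained at $Y=\Lambda\hat\phi$ (so an envelope/Danskin argument handles the last term), the first-order condition at $\hat{\mathbb Q}$ collapses to $\mathbb E_{\hat{\mathbb Q}}[g\,(V^{\Phi_0,\hat\phi}_T+\tfrac1\alpha\log(d\hat{\mathbb Q}/d\mathbb P))]=0$ for all bounded mean-zero $g$, i.e.\ $d\hat{\mathbb Q}/d\mathbb P=e^{-\alpha V^{\Phi_0,\hat\phi}_T}/\mathbb E[e^{-\alpha V^{\Phi_0,\hat\phi}_T}]$. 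This is exactly the measure making the Gibbs inequality an equality for $\hat\phi$, while the square-completion is an equality by the definition of $\hat\phi$; hence $\text{(primal value)}\le F(\hat{\mathbb Q})=-\tfrac1\alpha\log\mathbb E[e^{-\alpha V^{\Phi_0,\hat\phi}_T}]\le\text{(primal value)}$ is an equality throughout, so $\hat\phi$ is a primal optimiser, $F(\hat{\mathbb Q})$ is the primal value, and strict concavity of $\phi\mapsto\mathbb E[-e^{-\alpha V^{\Phi_0,\phi}_T}]$ (the term $-\tfrac{\Lambda}{2}\|\phi\|_{L^2}^2$ is strictly concave) makes the optimiser unique.

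The step I expect to be the real obstacle is making the first-order/variational argument rigorous in this infinite-dimensional, pathwise-information setting: one must fix an admissible class of density perturbations, justify differentiating the relative entropy along them (it is only lower semicontinuous, so truncation and monotone convergence are needed) and differentiating the quadratic term (the envelope step needs a dominated-convergence bound uniform in $\varepsilon$), and control the Fubini interchanges and the $L^2$-integrability of $S_T$ under the measures of finite dual value --- together with the measurable selection of the $\mathcal G^\Delta$-optional version of $t\mapsto\mathbb E_{\hat{\mathbb Q}}[S_T|\mathcal G^\Delta_t]$. These are precisely the technicalities that the self-contained Appendix~\ref{appendix} discharges once and for all, after which Proposition~\ref{prop2.1} is immediate.
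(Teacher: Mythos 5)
Your proposal is correct in outline and, like the paper, ultimately leans on the abstract duality of Appendix~\ref{appendix}; but the self-contained route you sketch differs from the paper's in two places worth recording. For existence of the dual minimizer the paper (Proposition~\ref{minimizer}) uses a Koml\'os/Ces\`aro argument plus a hands-on uniform-integrability estimate to pass to the limit in the conditional expectations, whereas you invoke Dunford--Pettis compactness of entropy sublevel sets together with lower semicontinuity/convexity of the quadratic term via its representation as a supremum of $\mathbb Q$-affine maps; this is a clean alternative, though since $S_T$ is unbounded those affine maps are not weakly continuous on $L^1$, so you still need exactly the entropy-based uniform integrability of $\xi_n S_T$ that the paper establishes (Lemma~\ref{moments} and the estimate \eqref{eq:14}). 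More substantially, for attainment on the primal side the paper never derives an exact stationarity identity: it computes only a one-sided derivative along \emph{linear} perturbations, obtains the inequality \eqref{haromcsillag}, and then imports the superhedging theorem (Theorem~\ref{superhedging}, taken from \cite{paolo}) to produce a strategy $\phi^\dagger$ attaining the dual value. You instead take the candidate \eqref{portfolio} directly and identify $d\hat{\mathbb Q}/d\mathbb P\propto e^{-\alpha V^{\Phi_0,\hat\phi}_T}$ from the first-order condition along exponential tilts, closing the duality chain with no superhedging input --- a genuinely different and arguably more self-contained argument. Be aware, however, that the Danskin/envelope differentiation of the quadratic term and the differentiation of the entropy along $e^{\varepsilon g}$-tilts are \emph{not} among the technicalities the appendix discharges: its machinery (Koml\'os, one-sided derivatives along segments, superhedging) serves its own route, so your direct argument would have to supply those estimates itself (feasible, since $V^{\Phi_0,\hat\phi}_T\in L^1(\hat{\mathbb Q})$, $g$ is bounded, and two-sided bounds using the maximizing $Y$ at $\mathbb Q_\varepsilon$ control the quadratic term).

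One concrete step of the paper's actual proof of Proposition~\ref{prop2.1} is missing from your write-up: the specialisation to the Bachelier model requires verifying Assumption~\ref{integrability}, i.e.\ $\sup_{t\in[0,T]}\mathbb E[\exp(aS_t^2)]<\infty$ for some $a>0$, which is immediate from Gaussianity of $S_t$; your observation that the dual functional is finite at $\mathbb P$ is weaker than, and does not replace, this exponential-moment hypothesis, which is what drives Lemma~\ref{moments} and hence the entire appendix.
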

\begin{proof}Follows from Proposition~\ref{propapp} below with the
  choice $\mathcal{G}_{t}:=\mathcal{G}_{t}^{\Delta}$ after
noting that $S_{t}/\sqrt{t}$ is standard Gaussian and so
$$
\sup_{t\in [0,T]}\mathbb E[\exp(a S_{t}^{2})]\leq \mathbb E[\exp(a S_{T}^{2})]<\infty,
$$
clearly holds for some small enough $a>0$.
\end{proof}

In order to solve the utility maximization problem it therefore
suffices to find the minimizer $\hat{\mathbb Q}$ of the dual problem
and work out the conditional expectation in~\eqref{portfolio}. This is
the path we will follow for the rest of this section. In a first step
we derive a particularly convenient representation for the target
functional of our dual problem:

\begin{lem}\label{lem:1}
  The dual infimum in~\eqref{eq:1} coincides with the one taken over
  all $\mathbb Q \in \mathcal Q$ whose densities take the form
  \begin{align}
\frac{d\mathbb Q}{d\mathbb P} = \exp\left(-\int_0^T\theta_t
    dW_t-\frac{1}{2}\int_0^T\theta_t^2dt\right) \label{eq:standarddensity}
  \end{align}
  for some bounded and adapted $\theta$ changing values only at finitely
  many deterministic times. For such $\mathbb Q$ the induced value~\eqref{eq:dualproblem} for
  the dual problem can be written as
  \begin{align}
    &\mathbb E_{\mathbb Q}\left[\Phi_0(S_T-S_0)+\frac{1}{\alpha}\log\left(\frac{d\mathbb Q}{d\mathbb P}\right) +\frac{1}{2\Lambda}\int_{0}^T\left|
\mathbb E_{\mathbb Q}\left[S_T\middle|\mathcal G^{\Delta}_t\right]-S_t\right|^2
      dt\right]\\
    =&-\Phi_0\int_{0}^T a_t dt+\frac{1}{2\alpha}\int_{0}^T a^2_t
      dt+\frac{1}{2\Lambda}\int_{0}^T \left(\int_t^T a_udu\right)^2 dt\\
&+\int_{0}^T  \mathbb E_{\mathbb Q}\left[\frac{1}{2\alpha}\int_{s}^{T}
                                                                             l^2_{t,s} dt+
\frac{1}{2\Lambda}
\int_{s}^{T}\left(\int_t^T l_{u,s}du\right)^2dt
+\frac{s\wedge\Delta}{2\Lambda}\left(1-\int_s^Tl_{u,s}du\right)^2\right]ds
  \end{align}
  where, for $t\in[0,T]$, $a_t$ and $l_{t,.}$ are
  determined by the It\^o-representations
  \begin{align}\label{eq:3}
     \theta_t = a_t + \int_0^t l_{t,s} dW^{\mathbb Q}_s
  \end{align}
  with respect to the $\mathbb Q$-Brownian motion $W^{\mathbb
    Q}_s=W_s+\int_0^s \theta_r dr$, $s \geq 0$.
\end{lem}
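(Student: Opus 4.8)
The plan is to establish the two claims of the lemma in turn: first that, in the reduced setting \eqref{eq:wlog} where $S=W$, the dual infimum is unchanged if one restricts to densities of the form \eqref{eq:standarddensity} with $\theta$ bounded, adapted, and piecewise constant in time with deterministic breakpoints; and second that, for such a $\mathbb{Q}$, the dual value collapses to the displayed expression. The first claim is a reduction-by-approximation; the second is a (lengthy but mechanical) computation using Girsanov and the It\^o representation \eqref{eq:3}.

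For the reduction, one inequality is trivial since such densities form a subclass of $\mathcal{Q}$. For the converse, take an arbitrary $\mathbb{Q}\in\mathcal{Q}$; by Girsanov and martingale representation its density is of the form \eqref{eq:standarddensity} with a predictable $\theta$ satisfying $\int_0^T\theta_t^2\,dt<\infty$ a.s., and finiteness of the relative entropy is equivalent to $\mathbb{E}_{\mathbb{Q}}\big[\int_0^T\theta_t^2\,dt\big]<\infty$. I would then approximate $\theta$ by processes $\theta^{(n)}$ obtained through truncation at a fixed level together with a time-discretization over a refining deterministic grid — e.g. replacing $\theta$ on each grid interval by the $\mathbb{Q}$-conditional expectation, given the left endpoint, of its average over that interval, and then truncating — so that $\theta^{(n)}$ is bounded, adapted and changes value at finitely many deterministic times, hence lies in the special class, while $\theta^{(n)}\to\theta$ suitably. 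It then remains to check that the three contributions to the dual value, namely the linear term $\Phi_0\mathbb{E}_{\mathbb{Q}}[S_T]$, the entropy term and the tracking term, all converge along the corresponding $\mathbb{Q}^{(n)}$. The one genuine subtlety is that the reference measure moves with $n$, which is why one truncates at a fixed level first: this provides uniform $L^p(\mathbb{P})$-moment bounds on the densities $d\mathbb{Q}^{(n)}/d\mathbb{P}$, enough uniform integrability to apply dominated convergence term by term (for the tracking term one first passes $\mathbb{E}_{\mathbb{Q}^{(n)}}[S_T\mid\mathcal{G}^{\Delta}_{t}]$ to $\mathbb{E}_{\mathbb{Q}}[S_T\mid\mathcal{G}^{\Delta}_{t}]$ in $L^2$ uniformly enough in $t$), after which one lets the truncation level grow.

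For the computation, fix $\mathbb{Q}$ in the special class. By Girsanov $W^{\mathbb{Q}}_t=W_t+\int_0^t\theta_r\,dr$ is a $\mathbb{Q}$-Brownian motion for the original filtration, and since $\theta$ is piecewise constant in time the representation \eqref{eq:3} is well defined with $a_t=\mathbb{E}_{\mathbb{Q}}[\theta_t]$ deterministic and $(t,s)\mapsto l_{t,s}$ jointly measurable (on each interval of constancy, $l_{t,\cdot}$ is the representation kernel of a single bounded random variable, supported on the past before the interval). The linear term is $\Phi_0\mathbb{E}_{\mathbb{Q}}[W_T]=-\Phi_0\int_0^Ta_t\,dt$, and the entropy term equals $\tfrac{1}{2\alpha}\mathbb{E}_{\mathbb{Q}}\big[\int_0^T\theta_t^2\,dt\big]$, which by the It\^o isometry and Fubini becomes $\tfrac{1}{2\alpha}\int_0^Ta_t^2\,dt+\tfrac{1}{2\alpha}\mathbb{E}_{\mathbb{Q}}\big[\int_0^T\!\int_s^Tl_{t,s}^2\,dt\,ds\big]$. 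For the tracking term, set $X_t:=\mathbb{E}_{\mathbb{Q}}[S_T\mid\mathcal{G}^{\Delta}_{t}]-S_t$. Using $S_T=W_T=W_{(t+\Delta)\wedge T}+(W^{\mathbb{Q}}_T-W^{\mathbb{Q}}_{(t+\Delta)\wedge T})-\int_{(t+\Delta)\wedge T}^T\theta_r\,dr$ and the facts that $W^{\mathbb{Q}}$-increments after $(t+\Delta)\wedge T$ have zero $\mathcal{F}^W_{t+\Delta}$-conditional mean and that conditioning a stochastic integral $\int_0^Tg_s\,dW^{\mathbb{Q}}_s$ on $\mathcal{F}^W_{t+\Delta}$ merely truncates it at $(t+\Delta)\wedge T$, one obtains, writing $m_r:=\theta_r-a_r=\int_0^rl_{r,s}\,dW^{\mathbb{Q}}_s$,
\[
X_t\;=\;-\!\int_t^T\! a_r\,dr\;+\;\Big(W^{\mathbb{Q}}_{(t+\Delta)\wedge T}-W^{\mathbb{Q}}_t-\!\int_t^{(t+\Delta)\wedge T}\!\! m_r\,dr-\mathbb{E}_{\mathbb{Q}}\Big[\int_{(t+\Delta)\wedge T}^{T}\!\! m_r\,dr\,\Big|\,\mathcal{F}^W_{t+\Delta}\Big]\Big),
\]
where the first summand is deterministic and equals $\mathbb{E}_{\mathbb{Q}}[X_t]$ while the bracketed remainder $\tilde X_t$ has zero mean.

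Applying the stochastic Fubini theorem to the two $dr$-integrals and truncating the conditional stochastic integral once more shows $\tilde X_t=\int_0^{(t+\Delta)\wedge T}h_{t,s}\,dW^{\mathbb{Q}}_s$ with $h_{t,s}=-\int_t^Tl_{r,s}\,dr$ for $s\le t$ and $h_{t,s}=1-\int_s^Tl_{r,s}\,dr$ for $t<s\le(t+\Delta)\wedge T$; hence, the cross term vanishing, $\mathbb{E}_{\mathbb{Q}}[X_t^2]=\big(\int_t^Ta_r\,dr\big)^2+\mathbb{E}_{\mathbb{Q}}\big[\int_0^t(\int_t^Tl_{r,s}\,dr)^2\,ds+\int_t^{(t+\Delta)\wedge T}(1-\int_s^Tl_{r,s}\,dr)^2\,ds\big]$. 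Dividing by $2\Lambda$, integrating over $t\in[0,T]$ and interchanging the order of integration finishes it: the $(\int_t^Ta_r\,dr)^2$-part gives the third term of the claimed identity; the $s\le t$ part, Fubini'd over $\{s\le t\}$, gives the $\tfrac{1}{2\Lambda}\mathbb{E}_{\mathbb{Q}}[\int_s^T(\int_t^Tl_{u,s}\,du)^2\,dt]$ contribution inside the $ds$-integral; and the $t<s\le(t+\Delta)\wedge T$ part, Fubini'd, produces the Lebesgue measure of $\{t:(s-\Delta)^+\le t\le s\}$, namely $s\wedge\Delta$, giving the last term. I expect the main obstacle to be the bookkeeping in the tracking term — pinning down the conditional expectations of the stochastic integrals with respect to $\mathcal{F}^W_{t+\Delta}$ and justifying the stochastic Fubini interchanges — together with the uniform-integrability estimates needed in the first part to pass to the limit while the reference measure varies.
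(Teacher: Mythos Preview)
Your proposal is correct and follows essentially the same approach as the paper. The paper also reduces to simple $\theta$ via martingale representation and an appeal to ``standard density arguments'' (after first rewriting the target functional as a $\mathbb{P}$-expectation, a trick you only implicitly invoke through your uniform $L^p(\mathbb{P})$ bounds), and then carries out the same computation: it expresses the dual target as in your \eqref{eq:firstrepdual}-type formula, expands the entropy term via It\^o isometry exactly as you do, computes $\mathbb{E}_{\mathbb{Q}}[\int_t^T\theta_u\,du\mid\mathcal{G}^\Delta_t]$ by Fubini plus the martingale property of stochastic integrals (your ``truncation at $(t+\Delta)\wedge T$''), squares via It\^o isometry to obtain the same three pieces, and finishes with Fubini over $\{(t,s):t<s\le(t+\Delta)\wedge T\}$ producing the factor $s\wedge\Delta$. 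Your decomposition of $X_t$ into its deterministic mean and zero-mean remainder $\tilde X_t$ is just a slightly more explicit reorganization of the paper's computation of the conditional expectation; the substance is identical.
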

\begin{proof}
  For any $\mathbb Q \in \mathcal Q$ the martingale representation
  property of Brownian motion gives us a predictable $\theta$ with
  $\mathbb E_{\mathbb Q}[\log(d\mathbb Q/d\mathbb P)]=\mathbb
  E_{\mathbb Q}[\int_0^T\theta^2_sds]/2<\infty$ such that the density
  $d\mathbb Q/d\mathbb P$ takes the
  form~\eqref{eq:standarddensity}. Using this density to rewrite the
  dual target functional as an expectation under $\mathbb P$, we can
  follow standard density arguments to see that the infimum over
  $\mathbb Q \in \mathcal Q$ can be realized by considering the
  $\mathbb Q$ induced via~\eqref{eq:standarddensity} by simple
  $\theta$ as described in the lemma's formulation. As a consequence,
  the It\^o representations of $\theta_t$ in~\eqref{eq:3} can be
  chosen in such a way that the resulting $(a_t,l_{t,.})$ are also
  measurable in $t$: in fact they only change when $\theta$ changes
  its value, i.e., at finitely many deterministic times.  This joint
  measurability will allow us below to freely apply Fubini’s theorem.

  Let us rewrite the dual target functional in terms of $a$ and $l$. In
  terms of $\theta$ and the $\mathbb Q$-Brownian motion
  $W^{\mathbb Q}$, it reads
  \begin{align}\label{eq:firstrepdual}
\mathbb E_{\mathbb Q}&\left[\Phi_0(S_T-S_0)+\frac{1}{\alpha}\log\left(\frac{d\mathbb Q}{d\mathbb P}\right) +\frac{1}{2\Lambda}\int_{0}^T\left(
\mathbb E_{\mathbb Q}\left[S_T\middle|\mathcal G^{\Delta}_t\right]-S_t\right)^2 dt\right]\\
&=\mathbb E_{\mathbb Q}\Bigg[-\Phi_0\int_{0}^T \theta_t dt+\frac{1}{2\alpha}\int_{0}^T \theta^2_u du\\
&\qquad\qquad+\frac{1}{2\Lambda}\int_{0}^T
\left(W^{\mathbb Q}_{(t+\Delta)\wedge T}-W^{\mathbb Q}_t-\mathbb E_{\mathbb Q}\left[\int_{t}^T\theta_u du\middle|\mathcal G^{\Delta}_t\right]\right)^2 dt\Bigg].
 \end{align}
From It\^o's isometry and Fubini's theorem we obtain
\begin{align}\label{3.3}
\mathbb E_{\mathbb Q}\left[\int_{0}^T \theta^2_u du\right]=\int_{0}^T a^2_t dt+\int_{0}^T\int_{s}^{T}\mathbb E_{\mathbb Q}\left[l^2_{t,s} \right]dt\, ds.
\end{align}
Again by Fubini's theorem it follows that
\begin{align}
\mathbb E_{\mathbb Q}\left[\int_{t}^T \theta_udu\middle|\mathcal G^{\Delta}_t\right]&=
\int_{t}^T a_u du+\mathbb E_{\mathbb Q}\left[\int_{0}^T\int_{t\vee s}^T l_{u,s}du\, dW^{\mathbb Q}_s\middle|\mathcal G^{\Delta}_t\right]\\
&=\int_{t}^T a_u du+\int_{0}^{(t+\Delta)\wedge T} \int_{t\vee s}^T l_{u,s} du\, dW^{\mathbb Q}_s
\end{align}
for any $t\in [0,T]$, where the last equality follows from the
martingale property of stochastic integrals. Thus, another application of It\^o's isometry yields
\begin{align}
\mathbb E_{\mathbb Q}&\left[\left(W^{\mathbb Q}_{(t+\Delta)\wedge T}-W^{\mathbb Q}_t-\mathbb E_{\mathbb Q}\left[\int_{t}^T \theta_u du\middle|\mathcal G_t^\Delta\right]\right)^2 \right]\\
&=\left(\int_{t}^T a_u du\right)^2+ \mathbb E_{\mathbb Q}\left[\int_{0}^t \left(\int_{t}^T l_{u,s} du\right)^2ds \right]\\
&\qquad
+\mathbb E_{\mathbb Q}\left[\int_{t}^{(t+\Delta)\wedge T} \left(1-\int_{s}^T l_{u,s} du\right)^2ds \right].
\end{align}
Plugging this together with~\eqref{3.3} into~\eqref{eq:firstrepdual}
and using Fubini's theorem then provides us with the claimed formula
for our dual target value:
\begin{align}
\mathbb E_{\mathbb Q}&\left[-\Phi_0(S_T-S_0)+\frac{1}{\alpha}\log\left(\frac{d\mathbb Q}{d\mathbb P}\right) +\frac{1}{2\Lambda}\int_{0}^T\left(
\mathbb E_{\mathbb Q}\left[S_T\middle|\mathcal G^{\Delta}_t\right]-S_t\right)^2 dt\right]\\
=&-\Phi_0\int_{0}^T a_t dt+\frac{1}{2\alpha}\int_{0}^T a^2_t
      dt+\frac{1}{2\Lambda}\int_{0}^T \left(\int_t^T a_udu\right)^2 dt\\
&+\int_{0}^T \mathbb E_{\mathbb Q}\left[\frac{1}{2\alpha}\int_{s}^{T}
                                                                             l^2_{t,s} dt+
\frac{1}{2\Lambda}
\int_{s}^{T}\left(\int_t^T l_{u,s}du\right)^2dt\right.\\&\qquad\qquad\qquad\qquad\qquad\qquad\qquad\left.
+\frac{s\wedge\Delta}{2\Lambda}\left(1-\int_s^Tl_{u,s}du\right)^2\right]ds.
\end{align}
\end{proof}
The crucial point of the above representation is that by taking the
minimum separately over $a$ and over $l_{.,s}$ for each $s \in [0,T]$
we obtain \emph{deterministic} variational problems that can be solved
explicitly (see the next Lemma~\ref{lem:2}) \emph{and} this
deterministic minimum yields a lower-bound for the dual target value
that, using some Gaussian process theory, will ultimately be shown to
actually coincide with it (see Lemma~\ref{lem:3} below).

\begin{lem}\label{lem:2}
   Recall our notation $\rho=\alpha\sigma^2/\Lambda=\alpha/\Lambda$
   (because $\sigma=1$; cf.\ \eqref{eq:wlog}).
  \begin{enumerate}
  \item[(i)] The minimum of the functional
    \begin{align}
-\Phi_0\int_{0}^T a_t dt+\frac{1}{2\alpha}\int_{0}^T a^2_t
      dt+\frac{1}{2\Lambda}\int_{0}^T \left(\int_t^T a_udu\right)^2 dt
    \end{align}
    over $a \in L^2([0,T],dt)$ is attained for $\hat{a}\Phi_0$ where
    \begin{align}\label{eq:4}
      \hat{a}_t = \frac{\alpha
      \cosh(\sqrt{\rho}(T-t))}{\cosh(\sqrt{\rho} T)}, \quad t \in [0,T].
    \end{align}
    The resulting minimum value is $-\hat{A}_T\Phi_0^2$ where
    \begin{align}\label{eq:hatA}
      \hat{A}_T=\Lambda\sqrt{\rho}\tanh(\sqrt{\rho} T)/2.
    \end{align}
 \item[(ii)] For any $s \in [0,T]$, the minimum of the functional
    \begin{align}
\frac{1}{2\alpha}\int_{s}^{T}
                                                                             l^2_{t} dt+
\frac{1}{2\Lambda}
\int_{s}^{T}\left(\int_t^T l_{u}du\right)^2dt
+\frac{s\wedge\Delta}{2\Lambda}\left(1-\int_s^Tl_{u}du\right)^2
    \end{align}
   over $l \in L^2([s,T],dt)$ is attained at
    \begin{align}\label{eq:5}
       \hat{l}_{t,s} =  \frac{ \rho (s \wedge \Delta)
      \cosh(\sqrt{\rho}(T-t))}{\cosh(\sqrt{\rho}(T-s))+ \sqrt{\rho}
      (s \wedge \Delta) \sinh(\sqrt{\rho}(T-s))}, \quad t\in[s,T].
    \end{align}
    The corresponding minimum value is
    \begin{align}\label{eq:hatL}
      \hat{L}_s=\frac{1}{2\Lambda}\frac{s \wedge  \Delta}{1+(s\wedge \Delta)\sqrt{\rho} \tanh(\sqrt{\rho}(T-s))}.
     \end{align}
    \end{enumerate}
  \end{lem}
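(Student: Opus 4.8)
The plan is to solve each of the two convex quadratic variational problems by deriving and solving the associated Euler--Lagrange equations, which in both cases turn out to be linear second-order ODEs with constant coefficients, hence solvable in terms of $\cosh$ and $\sinh$.

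\textbf{Part (i).} The functional in (i) is strictly convex and coercive on $L^2([0,T],dt)$ since the quadratic part $\frac{1}{2\alpha}\int_0^T a_t^2\,dt$ dominates, so a unique minimizer exists and is characterized by vanishing of the Gateaux derivative. First I would compute the first variation: perturbing $a$ by $\varepsilon h$ and differentiating at $\varepsilon=0$ gives, after an application of Fubini to the cross term $\int_0^T\big(\int_t^T a_u\,du\big)\big(\int_t^T h_u\,du\big)\,dt$,
\begin{align}
\int_0^T h_t\left(-\Phi_0+\frac{1}{\alpha}a_t+\frac{1}{\Lambda}\int_0^t\int_r^T a_u\,du\,dr\right)dt=0
\end{align}
for all $h\in L^2$, whence the bracket vanishes for a.e.\ $t$. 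Writing $F(t):=\int_0^t\int_r^T a_u\,du\,dr$, this reads $a_t=\alpha\Phi_0-\frac{\alpha}{\Lambda}F(t)=\alpha\Phi_0-\rho F(t)$. Differentiating twice in $t$ (note $F'(t)=\int_t^T a_u\,du$, $F''(t)=-a_t$) yields $a_t''=-\rho F''(t)=\rho\, a_t$, i.e. $a''=\rho a$, with boundary data read off from the Euler--Lagrange relation: at $t=T$ we have $F'(T)=0$ so $a_T'=-\rho F'(T)=0$, and at $t=0$, $F(0)=0$ gives $a_0=\alpha\Phi_0$. The unique solution of $a''=\rho a$ with $a'(T)=0$, $a(0)=\alpha\Phi_0$ is $a_t=\alpha\Phi_0\cosh(\sqrt{\rho}(T-t))/\cosh(\sqrt{\rho}T)$, giving $\hat a_t\Phi_0$ with $\hat a$ as in~\eqref{eq:4}. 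For the minimal value I would substitute back: since at the optimum the functional is $-\frac{\Phi_0}{2}\int_0^T \hat a_t\Phi_0\,dt$ plus $\frac12$ times the linear-in-$\Phi_0$ part evaluated against the optimizer (using the Euler--Lagrange identity to pair terms), this collapses to $-\tfrac{\alpha}{2}\Phi_0^2\int_0^T \cosh(\sqrt{\rho}(T-t))/\cosh(\sqrt{\rho}T)\,dt=-\tfrac{\alpha}{2}\Phi_0^2\,\tfrac{\sinh(\sqrt{\rho}T)}{\sqrt{\rho}\cosh(\sqrt{\rho}T)}=-\hat A_T\Phi_0^2$ with $\hat A_T$ as in~\eqref{eq:hatA}, using $\alpha/\sqrt{\rho}=\Lambda\sqrt{\rho}$.

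\textbf{Part (ii).} This is structurally identical but on the interval $[s,T]$ and with the boundary term $\tfrac{s\wedge\Delta}{2\Lambda}(1-\int_s^T l_u\,du)^2$ replacing the linear term. Again strict convexity and coercivity give a unique minimizer. The first variation is
\begin{align}
\int_s^T h_t\left(\frac{1}{\alpha}l_t+\frac{1}{\Lambda}\int_s^t\int_r^T l_u\,du\,dr-\frac{s\wedge\Delta}{\Lambda}\Big(1-\int_s^T l_u\,du\Big)\right)dt=0,
\end{align}
so with $G(t):=\int_s^t\int_r^T l_u\,du\,dr$ and the constant $m:=\frac{s\wedge\Delta}{\Lambda}(1-\int_s^T l_u\,du)$ we get $l_t=\alpha m-\rho\,G(t)$; differentiating twice as before gives the same ODE $l''=\rho l$ on $[s,T]$ with $l'(T)=0$ and $l(s)=\alpha m$. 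Hence $l_{t,s}=c\cosh(\sqrt{\rho}(T-t))$ for a constant $c$ to be pinned down from the definition of $m$, i.e.\ from the self-consistency equation $c\cosh(\sqrt{\rho}(T-s))=\alpha m=\frac{\alpha(s\wedge\Delta)}{\Lambda}\big(1-\int_s^T c\cosh(\sqrt{\rho}(T-u))\,du\big)=\rho(s\wedge\Delta)\big(1-\tfrac{c\sinh(\sqrt{\rho}(T-s))}{\sqrt{\rho}}\big)$. Solving this linear equation for $c$ yields $c=\rho(s\wedge\Delta)/\big(\cosh(\sqrt{\rho}(T-s))+\sqrt{\rho}(s\wedge\Delta)\sinh(\sqrt{\rho}(T-s))\big)$, which is exactly~\eqref{eq:5}. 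The minimal value $\hat L_s$ then follows by plugging back in; as in (i), at the optimum the quadratic functional equals $\tfrac12 m\cdot(1-\int_s^T\hat l_{u,s}\,du)$ by the Euler--Lagrange pairing, and a short computation using $1-\int_s^T\hat l_{u,s}\,du=\cosh(\sqrt{\rho}(T-s))/(\cosh(\sqrt{\rho}(T-s))+\sqrt{\rho}(s\wedge\Delta)\sinh(\sqrt{\rho}(T-s)))$ reduces this to $\tfrac{1}{2\Lambda}\tfrac{s\wedge\Delta}{1+(s\wedge\Delta)\sqrt{\rho}\tanh(\sqrt{\rho}(T-s))}$, matching~\eqref{eq:hatL}.

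I do not expect a serious obstacle here: both problems are classical calculus-of-variations exercises and the only real work is bookkeeping. The one point demanding care is the reduction of the integro-differential Euler--Lagrange equation to the clean ODE $\,''=\rho\,\cdot\,$ with the correct boundary conditions — in particular getting the Neumann condition at $t=T$ and the Dirichlet-type condition at the left endpoint right, and then correctly pinning down the free constant in (ii) from the self-consistency relation. A secondary minor nuisance is the algebraic simplification of the minimal values into the stated hyperbolic-tangent form, where one repeatedly uses $\alpha/\sqrt\rho=\Lambda\sqrt\rho$ and $\rho=\alpha/\Lambda$; I would double-check these by verifying the $\Delta\to 0$ and $\Delta\to\infty$ (equivalently $s\wedge\Delta\in\{0,\infty\}$) limits against the known frictionless/Merton benchmarks mentioned after the theorem.
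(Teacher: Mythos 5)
Your derivation of the minimizers is correct, and your route differs mildly from the paper's: you work directly with the unknown $a$ (resp.\ $l$), derive the integro--differential first-order condition via a Fubini swap, and bootstrap it to the ODE $\,\cdot''=\rho\,\cdot\,$ with the Dirichlet condition at the left endpoint and the natural (Neumann) condition at $T$, pinning the free constant in (ii) by a self-consistency equation. The paper instead reparametrizes through $g(t)=\int_t^T l_u\,du$, which turns each problem into a classical fixed-endpoint problem for the Lagrangian $H(g,\dot g)=\frac{1}{2\Lambda}g^2+\frac{1}{2\alpha}\dot g^2$ with $g(s)=\Theta$, $g(T)=0$, solves $\ddot g=\rho g$, and then minimizes an explicit scalar quadratic in $\Theta$ (absorbing the boundary term $\frac{s\wedge\Delta}{2\Lambda}(1-\Theta)^2$). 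The two are equivalent; yours avoids the two-stage structure at the cost of handling the natural boundary condition and the self-consistency relation, while the paper's reduction to a one-dimensional quadratic makes the minimal values fall out with less bookkeeping.

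One step in your value computation for (ii) is wrong as stated, though easily repaired. Writing $\beta=s\wedge\Delta$ and $\hat L=\int_s^T\hat l_{u,s}\,du$, the correct ``pairing'' identity is: minimum $=$ constant term $+\ \tfrac12\,(\text{linear term at }\hat l)\ =\ \frac{\beta}{2\Lambda}-\frac{\beta}{2\Lambda}\hat L=\frac{\beta}{2\Lambda}\bigl(1-\hat L\bigr)=\tfrac{m}{2}$, not $\tfrac12\,m\,(1-\hat L)$ as you wrote; your expression equals $\frac{\beta}{2\Lambda}(1-\hat L)^2$ and does \emph{not} reduce to~\eqref{eq:hatL} (it carries an extra factor $1/(1+\beta\sqrt{\rho}\tanh(\sqrt{\rho}(T-s)))$). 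Note that, unlike in (i), the functional in (ii) has a nonzero constant term $\frac{\beta}{2\Lambda}$, which is what the half-linear-term formula must be added to. With the corrected identity and $1-\hat L=\cosh(\sqrt{\rho}(T-s))/\bigl(\cosh(\sqrt{\rho}(T-s))+\beta\sqrt{\rho}\sinh(\sqrt{\rho}(T-s))\bigr)$ you indeed obtain~\eqref{eq:hatL}; part (i), where no constant term is present, is handled correctly in your write-up.
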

\begin{proof}
  We start with (ii).  The uniqueness follows from strict convexity of
  the functional to be minimized over $l \in L^2([s,T],dt)$. To write
  this as a standard variational problem, put
  $H(u,v):=\frac{1}{2\Lambda}u^2+ \frac{1}{2\alpha} v^2$ for
  $ u,v\in\mathbb R$, reparametrize $l$ via $g(t) = \int_t^T l_u du$,
  $t \in [s,T]$, and
  consider, for any $s \in [0,T]$ and any $\Theta\in\mathbb R$, the
  problem to minimize $\int_{s}^T H(g_t,\dot{g}_t) dt$ over
  $g \in C^1[s,T]$ subject to the constraints $g(s)=\Theta$, $g(T)=0$.

The optimization problem is convex and so it has a unique solution which has to satisfy the
Euler–Lagrange equation (for details see Section~1 in \cite{GF:63})
$$\frac{d}{dt}\frac{\partial H}{\partial \dot{g}}=\frac{\partial H}{\partial{g}}.$$
Thus, the optimizer is the unique solution of the linear ODE
$$\ddot{g}=\rho g, \ \ g(s)=\Theta, \ \ g(T)=0,$$
namely
$$g^{\Theta,s}(t):=\frac{\Theta\sinh\left(\sqrt \rho(T-t)\right)}
{\sinh\left(\sqrt{\rho}(T-s)\right)}, \ \ t\in [s,T].$$

Next, observe that for the function $g(t):=\int_{t}^T i_u du$, $t\in [s,T]$ we have
$\dot{g}=-l$ where $\dot{g}$ is the weak derivative of $g$, and so,
$$\frac{1}{2\alpha}\int_{s}^T \psi^2_t dt+\frac{1}{2\Lambda}\int_{s}^T\left(\int_{t}^{T}\psi_u du\right)^2dt=\int_{s}^T H(g_t,\dot{g}_t) dt.$$
Thus, from simple density arguments (needed since $g$ is not necessarily smooth) we obtain that
\begin{align}
\inf_{l\in L^2 ([s,T],dt)}&\left\{\frac{1}{2\alpha}\int_{s}^{T}l^2_{t} dt+
\frac{1}{2\Lambda}
\int_{s}^{T}\left(\int_t^T l_{u}du\right)^2dt
+\frac{s\wedge\Delta}{2\Lambda}\left(1-\int_s^T l_{u}du\right)^2\right\}
\\
&=\inf_{\Theta\in\mathbb R}\left\{\int_{s}^T H(g^{\Theta,s}_t,\dot{g}^{\Theta,s}_t) dt+\frac{s\wedge\Delta}{2\Lambda}(1-\Theta)^2\right\}\\
&=\frac{1}{2\Lambda}\inf_{\Theta\in\mathbb R}
\left\{\frac {\coth\left(\sqrt{\rho}(T-s)\right)}{\sqrt{\rho}}\Theta^2+(s\wedge\Delta) (1-\Theta)^2\right\}
\end{align}
where the last equality follows from simple computations.

Finally, the minimum of the above quadratic pattern (in $\Theta$) is atained at
$$\Theta^{*}=-\frac{ (s\wedge\Delta)\sqrt{\rho}}{\coth\left(\sqrt{\rho}(T-s)\right)+(s\wedge\Delta)\sqrt{\rho}}.$$
This gives \eqref{eq:5}--\eqref{eq:hatL}.

The proof of~(i) is almost the same as the of~(ii), but slightly simpler.
Observe that
\begin{align}
\inf_{a\in L^2([0,T],dt)}&\left\{-\Phi_0\int_{0}^T a_t dt+\frac{1}{2\alpha}\int_{0}^T a^2_t
      dt+\frac{1}{2\Lambda}\int_{0}^T \left(\int_t^T a_udu\right)^2 dt\right\}\\
&=\inf_{\Theta\in\mathbb R}\left\{-\Phi_0 \Theta+\int_{0}^T H(g^{\Theta,0}_t,\dot{g}^{\Theta,0}_t) dt\right\}\\
&=\inf_{\Theta\in\mathbb R}\left\{-\Phi_0 \Theta+\frac {\coth\left(\sqrt{\rho}T\right)}{2\sqrt{\rho}\Lambda}\Theta^2\right\}.
\end{align}
The minimum of the above quadratic pattern (in $\Theta$) is attained in
$$\tilde{\Theta}^{*}=\Phi_0\sqrt{\rho}\Lambda \tanh\left(\sqrt{\rho}T\right).$$
This gives \eqref{eq:4}--\eqref{eq:hatA}.
\end{proof}

The previous two lemmas suggest a way to construct a candidate for the
solution to the dual problem: Find $\hat{\mathbb Q} \sim \mathbb P$
whose density is given by
 \begin{align} \label{eq:2}
\frac{d\hat{\mathbb Q}}{d\mathbb P} = \exp\left(-\int_0^T\hat\theta_t
    dW_t-\frac{1}{2}\int_0^T\hat\theta_t^2dt\right)
  \end{align}
with
\begin{align}
  \label{eq:hattheta}
  \hat{\theta}_s =\hat{a}_s\Phi_0+ \int_0^s\hat{l}_{s,r}
  d\hat{W}^{\hat{\mathbb Q}}_r, \quad s \in [0,T].
\end{align}
For the associated Brownian motion
$\hat{W}=W^{\hat{\mathbb Q}}=W+\int_0^. \hat{\theta}_r dr$ this implies the
Volterra-type integral equation
\begin{align}\label{eq:Volterra}
  W_t+\int_0^t \hat{a}_s  \Phi_0 ds  = \hat{W}_t-\int_0^t \int_0^s\hat{l}_{s,r}
  d\hat{W}_rds, \quad t \in [0,T].
\end{align}
Integral equations of this type occur in~\cite{H:68,HH:93}; see also \cite{Cheridito:01}. By
considering $W+\int_0^. \hat{a}_r \Phi_0 dr$ as a Brownian motion with
respect to some probability measure which is equivalent to
$\mathbb P$, we can apply the results from Section~6.4 in \cite{HH:93}
(in particular see Theorem 6.3 and its proof there). We obtain that
\eqref{eq:Volterra} has a unique solution given by
\begin{align}
  \label{eq:hatW}
\hat{W}_t& = W_t+\Phi_0\int_0^t \hat{a}_s ds
-\int_{0}^t\int_{0}^s \hat\kkernel_{s,r} \left(dW_r+\Phi_0\hat{a}_r dr\right) ds \nonumber\\
&=W_t-\int_{0}^t\int_{0}^s \hat\kkernel_{s,r} dW_r ds+\Phi_0\left(\int_{0}^t\hat{a}_s ds-\int_{0}^t\int_{0}^s \hat\kkernel_{s,r}\hat a_r dr ds\right)
\end{align}
where $\hat{\kkernel}$
is the associated resolvent kernel characterized by the equation
\begin{align}
   \label{eq:6}
\hat{\kkernel}_{t,s}+\hat{l}_{t,s}=\int_{s}^t
   \hat{l}_{t,u}\hat{\kkernel}_{u,s}du, \quad 0 \leq s \leq t \leq T.
\end{align}
Moreover, $\hat{W}$ is a Brownian motion with respect to $\hat{\mathbb Q}$ which is well defined by~\eqref{eq:2}.

As our $\hat{l}$ is separable multiplicatively, \eqref{eq:6}
can be reduced to a linear ODE from which we compute the explicit
solution
\begin{align}\label{eq:hatkappa}
 \hat{\kkernel}_{t,s}=-\exp\left(\int_s^t
  \hat{l}_{u,u}du\right)\hat{l}_{t,s}, \quad 0 \leq s \leq t \leq T .
\end{align}

We are now in a position to solve the dual problem:
\begin{lem}\label{lem:3}
The dual infimum~\eqref{eq:1} is attained by $\hat{\mathbb Q} \sim
\mathbb P$ with density
\begin{align}
  \label{eq:9}
  \frac{d\hat{\mathbb Q}}{d\mathbb P} = \exp\left(-\int_0^T
  \hat{\theta}_tdW_t-\frac{1}{2}\int_0^T \hat{\theta}_t^2dt\right)
\end{align}
for $\hat{\theta}$ constructed in~\eqref{eq:hattheta} with $\hat{W}$ as
given by~\eqref{eq:hatW}; this $\hat{W}$ coincides with the
$\hat{\mathbb Q}$-Brownian motion induced by the $\mathbb P$-Brownian
motion $W$ via Girsanov's theorem. The value of the dual problem is
\begin{align}\label{eq:18}
-\frac{\Lambda\Phi^2_0 \sqrt{\rho}}{2\coth\left(\sqrt{\rho}T\right)}+\int_{0}^T \frac{1}{2\Lambda}\frac{(s\wedge\Delta)}{1+(s\wedge\Delta) \sqrt{\rho}\tanh\left(\sqrt{\rho}(T-s)\right)}ds.
  \end{align}
\end{lem}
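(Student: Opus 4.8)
The plan is to show that $\hat{\mathbb Q}$ defined through \eqref{eq:9}--\eqref{eq:hattheta} is admissible (i.e.\ $\hat{\mathbb Q}\in\mathcal Q$), that it attains the claimed value \eqref{eq:18}, and that no other $\mathbb Q\in\mathcal Q$ can do better. The last point is the easy half: by Lemma~\ref{lem:1} it suffices to bound the dual target from below over all $\mathbb Q$ with densities of the form \eqref{eq:standarddensity} for simple $\theta$, and for each such $\mathbb Q$ we may, inside the expectation, minimize pointwise in $\omega$ over the It\^o coefficients $a$ and $l_{\cdot,s}$. By part~(i) of Lemma~\ref{lem:2} the $a$-block is bounded below by $-\hat A_T\Phi_0^2=-\Lambda\Phi_0^2\sqrt{\rho}\tanh(\sqrt\rho T)/2$, and by part~(ii), applied $\omega$-wise for each fixed $s$, the integrand of the $\int_0^T(\cdots)ds$ term is bounded below by $\hat L_s$ from \eqref{eq:hatL}; integrating in $s$ and using $\hat A_T=\Lambda\Phi_0^2\sqrt\rho\tanh(\sqrt\rho T)/2 = \Lambda\Phi_0^2\sqrt\rho/(2\coth(\sqrt\rho T))$ yields exactly the expression in \eqref{eq:18} as a lower bound for the dual infimum.

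For the matching upper bound I would verify that the constructed $\hat{\mathbb Q}$ realizes this bound with equality. First I need $\hat\theta$ to be well defined and $\hat{\mathbb Q}$ to have finite entropy: since $\hat a$ is bounded on $[0,T]$ and $\hat l_{s,r}$ is bounded on the simplex (it is an explicit ratio of $\cosh$/$\sinh$ terms that stays finite because $s\wedge\Delta\ge 0$), the process $\hat\theta_s=\hat a_s\Phi_0+\int_0^s\hat l_{s,r}\,d\hat W_r$ is Gaussian with $\sup_{s\le T}\mathbb E_{\hat{\mathbb Q}}[\hat\theta_s^2]<\infty$, so $\mathbb E_{\hat{\mathbb Q}}[\int_0^T\hat\theta_s^2ds]<\infty$ and $d\hat{\mathbb Q}/d\mathbb P$ in \eqref{eq:9} is a genuine probability density with finite relative entropy; here I would invoke the Gaussian Volterra theory (Section~6.4 of \cite{HH:93}, Theorem~6.3) exactly as quoted in the excerpt to guarantee that \eqref{eq:Volterra} has the unique solution \eqref{eq:hatW} and that $\hat W$ is indeed the $\hat{\mathbb Q}$-Brownian motion obtained from $W$ by Girsanov. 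The key structural point is that, by construction, the It\^o representation \eqref{eq:3} of this particular $\hat\theta$ with respect to $\hat W=W^{\hat{\mathbb Q}}$ is precisely $a_t=\hat a_t\Phi_0$ and $l_{t,s}=\hat l_{t,s}$; these are deterministic, so the conditional expectations $\mathbb E_{\hat{\mathbb Q}}[\cdots]$ in the formula of Lemma~\ref{lem:1} drop out and the dual value becomes the purely deterministic quantity $-\Phi_0\int_0^T\hat a_t\Phi_0\,dt+\frac1{2\alpha}\int_0^T\hat a_t^2\Phi_0^2dt+\frac1{2\Lambda}\int_0^T(\int_t^T\hat a_u\Phi_0 du)^2dt+\int_0^T[\frac1{2\alpha}\int_s^T\hat l_{t,s}^2dt+\frac1{2\Lambda}\int_s^T(\int_t^T\hat l_{u,s}du)^2dt+\frac{s\wedge\Delta}{2\Lambda}(1-\int_s^T\hat l_{u,s}du)^2]ds$, which by the two minimization statements in Lemma~\ref{lem:2} equals $-\hat A_T\Phi_0^2+\int_0^T\hat L_s\,ds$, i.e.\ \eqref{eq:18}.

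Combining the two halves gives that the dual infimum is attained at $\hat{\mathbb Q}$ with value \eqref{eq:18}; uniqueness of the minimizer was already recorded in Proposition~\ref{prop2.1}, but it also follows here from the strict convexity exploited in Lemma~\ref{lem:2}, since any other minimizer would have to agree with $(\hat a_t\Phi_0,\hat l_{t,s})$ a.e., forcing the same $\hat\theta$ and hence the same $\hat{\mathbb Q}$.

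I expect the main obstacle to be the rigorous passage between the $\omega$-wise lower bound and the genuine infimum over $\mathcal Q$: one must be careful that the simple-$\theta$ reduction in Lemma~\ref{lem:1} together with the joint measurability of $(a,l)$ in $(t,s,\omega)$ legitimizes applying Lemma~\ref{lem:2}(ii) separately for a.e.\ $(s,\omega)$ and then integrating, and that the resulting lower bound is not strict — which is why one genuinely needs the explicit $\hat{\mathbb Q}$ of the second half to be admissible and to have the \emph{deterministic} It\^o coefficients claimed. A secondary technical point is checking that the Volterra equation \eqref{eq:Volterra} may indeed be treated by first changing measure to make $W+\int_0^\cdot\hat a_r\Phi_0\,dr$ a Brownian motion and then applying the resolvent-kernel construction; this is where I would lean most heavily on \cite{HH:93}, and also where the explicit resolvent \eqref{eq:hatkappa} (obtained because $\hat l$ is multiplicatively separable) makes everything concrete.
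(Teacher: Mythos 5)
Your proposal is correct and follows essentially the same route as the paper: a lower bound from the reduction/representation of Lemma~\ref{lem:1} combined with the deterministic minimizations of Lemma~\ref{lem:2}, matched by checking that the Gaussian--Volterra construction of $\hat{\mathbb Q}$ (finite entropy, $\hat W$ the Girsanov Brownian motion, deterministic It\^o coefficients $\hat a\Phi_0$ and $\hat l$) attains that bound. The paper compresses the step you rightly flag as delicate — extending the Lemma~\ref{lem:1} formula beyond simple bounded $\theta$ — into the observation that $\hat W$ and $W$ generate the same filtration, which is exactly what justifies re-running that computation for $\hat\theta$.
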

\begin{proof}
  The construction of $\hat{\mathbb Q}$, $\hat{W}$ and $\hat{\theta}$
  has already been established by the preceding discussion. It is
  readily checked that $\hat{\mathbb Q}$ has finite entropy relative
  to $\mathbb P$ and so $\hat{\mathbb Q} \in \mathcal Q$. Note that
  $\hat{W}$ and $W$ generate the same filtration because
  of~\eqref{eq:Volterra} and~\eqref{eq:hatW} and so we can follow the
  same reasoning as in the proof of Lemma~3.2 to obtain its
  representation for the dual target functional also for
  $\hat{\mathbb Q}$. Recalling the minimizing properties of $\hat{a}$ and
  $\hat{l}_{.,s}$, $s \in [0,T]$, it then follows that
  $\hat{\mathbb Q}$ solves the dual problem with value~\eqref{eq:18}.
\end{proof}
By~\eqref{eq:1} the above value~\eqref{eq:18} of the dual problem
already yields the claimed value~\eqref{eq:primalvalue} for our primal
utility maximization problem.  For the completion of the proof of
Theorem~\ref{thm2.1} it therefore remains to work out the optimal
turnover policy $\hat{\phi}$. Due to its dual
description~\eqref{portfolio}, it suffices to compute
$\mathbb E_{\hat{\mathbb Q}}\left[S_T|\mathcal G^{\Delta}_t\right]=
\mathbb E_{\hat{\mathbb Q}}\left[W_T|\mathcal
  F_{t+\Delta}\right]$. Recalling the Volterra-type
equation~\eqref{eq:Volterra} and using Fubini's theorem we can write
\begin{align}
  \label{eq:10}
     W_T&= \hat{W}_T- \int_0^T \left(\hat{a}_u\Phi_0+ \int_0^u\hat{l}_{u,s}
          d\hat{W}_s\right)du \\
  &= \int_0^T \left(1-\int_s^T
    \hat{l}_{u,s}du\right)d\hat{W}_s-\int_0^T\hat{a}_u du\Phi_0.
\end{align}
Thus, for any $t \in [0,T]$, we find
\begin{align}
  \label{eq:11}
  &\mathbb E_{\hat{\mathbb Q}}\left[S_T\middle|\mathcal G_t^\Delta\right] \\
  &=
\int_0^{(t+\Delta)  \wedge T} \left(1-\int_s^T \hat{l}_{u,s}du\right)d\hat{W}_s-
  \int_0^T\hat{a}_u du\Phi_0 \\
&= \int_0^{(t+\Delta)  \wedge T} \left(1-\int_s^T
                                                                \hat{l}_{u,s}du\right)\left(dW_s-\int_{0}^s \hat\kkernel_{s,r}
                                                                d W_r ds\right)\\
 &\qquad +
  \Phi_0 \left(\int_0^{{(t+\Delta)
  \wedge T}}\left(1-\int_s^T \hat{l}_{u,s}du\right)\left(\hat{a}_sds-\int_0^s\hat{\kkernel}_{s,r}\hat{a}_rdr\,ds\right)-\int_0^T\hat{a}_u du\right)
\end{align}
where in the second step we used~\eqref{eq:hatW} to get an expression
in terms of the original input to our problem $W$ rather than
$\hat{W}$. The structure of this expression suggests to consider for
$X=W$ and $X=\int_0^. \hat{a}_s ds$ the integral operator
\begin{align}
  \mathcal I_{t}^{T}(X) := \int_0^{(t+\Delta)\wedge T} \left(1-\int_s^T
                                                                \hat{l}^T_{u,s}du\right)\left(dX_s-\int_{0}^s \hat\kkernel^T_{s,r}
                                                                d X_r ds\right)
\end{align}
for continuous paths $X$. Notice that the $dX$-integrals can be
defined through integration by parts which reveals in particular that
$\mathcal I^T_t(X)$ depends continuously on $X$; notice also that we
used the notation $\hat{l}^T$ and $\hat{\kkernel}^T$ in lieu of $l$
and $\kkernel$ to emphasize that these kernels depend on the time
horizon $T$. In conjunction with~\eqref{portfolio} and $S_t=W_t$, this
provides us with a (somewhat) explicit `open loop' expression of the
optimal turnover policy:
\begin{align} \label{eq:openloop}
  \hat{\phi}_t = \frac{1}{\Lambda}
  \left(\mathcal{I}_{t}^{T}\left(W\right)-W_t
 +\Phi_0\left(\mathcal{I}_{t}^{T}\left(\int_0^.\hat{a}^T_sds\right)-\int_0^T\hat{a}^T_udu\right)\right),
\end{align}
where, again, $\hat{a}^T$ is used to recall that $\hat{a}$
of~\eqref{eq:4} depends on $T$.

To establish the policy's more informative feedback description given in
Theorem~\ref{thm2.1}, we note next that dynamic programming holds for
our problem:

\begin{lem}\label{lem:policy}
  The optimal policy
  $\hat{\phi}$ of~\eqref{eq:openloop} can alternatively be described in the form
  \begin{align}\nonumber
    \hat{\phi}_t = \frac{1}{\Lambda}
  \Bigg(&\mathcal{I}_{0}^{T-t}\left(W_{t+.}-W_t\right)-W_t\\         \label{eq:13}
&\qquad+\hat{\Phi}_t\left(\mathcal{I}_{0}^{T-t}\left(\int_0^.\hat{a}^{T-t}_sds\right)-\int_0^{T-t}\hat{a}^{T-t}_udu\right)\Bigg)
  \end{align}
   where $\hat{\Phi}_t=\Phi_0+\int_0^t \hat{\phi}_sds$ for $t \in [0,T]$.
 \end{lem}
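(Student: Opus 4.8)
The plan is to establish the feedback form~\eqref{eq:13} as a consequence of a time-consistency (dynamic programming) property of the dual problem, together with the explicit structure of $\hat\theta$ in~\eqref{eq:hattheta}. The key observation is that the "open loop" formula~\eqref{eq:openloop} was derived for a generic initial position $\Phi_0$ and time horizon $T$; I would like to say that, conditionally on the information available at time $t$, the remaining optimization problem is again of the same type but with horizon $T-t$, initial time shifted to $t$, and initial position $\hat\Phi_t$. The clean way to see this is to note that the dual optimizer $\hat{\mathbb Q}$ has a density driven by $\hat\theta$ whose Itô representation~\eqref{eq:hattheta} has a drift part $\hat a_s\Phi_0$ that is \emph{linear} in $\Phi_0$ and a martingale part $\int_0^s \hat l_{s,r}d\hat W_r$ that does \emph{not} depend on $\Phi_0$ at all; moreover the kernels $\hat a^T,\hat l^T,\hat\kappa^T$ depend on the horizon only through $T-t$ when one restricts attention to the interval $[t,T]$ and re-centres time. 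So the natural route is: (a) show that the restriction of $\hat{\mathbb Q}$ to $[t,T]$ conditioned on $\mathcal G^\Delta_{t-}$ (or the appropriate sub-$\sigma$-field) is the dual optimizer for the shifted problem with horizon $T-t$ and initial position $\hat\Phi_t$; (b) apply the already-established open-loop formula~\eqref{eq:openloop} to that shifted problem, which literally produces the right-hand side of~\eqref{eq:13}.

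Concretely, I would proceed as follows. First, fix $t\in[0,T]$ and introduce the shifted Brownian motion $W^{(t)}_s := W_{t+s}-W_t$, $s\in[0,T-t]$, with its natural filtration enlarged by $\Delta$ units as before. Second, verify directly from~\eqref{eq:4}, \eqref{eq:5}, \eqref{eq:hatkappa} that $\hat a^{T}_{t+s} = \hat a^{T-t}_s$, that $\hat l^T_{t+s,t+r}$ restricted to $r,s\in[0,T-t]$ agrees with $\hat l^{T-t}_{s,r}$ up to the shift (this is where the $\cosh/\sinh$ structure and the fact that $(t+s)\wedge\Delta$ enters only through the already-realized portion must be handled with a little care), and likewise for $\hat\kappa$. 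Third, use the Volterra equation~\eqref{eq:Volterra} to express $\hat W_{t+s}-\hat W_t$ in terms of $W^{(t)}$ and a term carrying the accumulated effect up to time $t$; the point is that the "memory" entering the dynamics on $[t,T]$ through $\int_0^u \hat l_{u,r}d\hat W_r$ for $u>t$ splits into a part depending only on the post-$t$ increments and a part that is $\mathcal G^\Delta_t$-measurable and gets absorbed into the effective initial position. Fourth, invoke Proposition~\ref{prop2.1}/Lemma~\ref{lem:3}: since $\hat{\mathbb Q}$ is \emph{the} (unique) dual optimizer, its conditional behaviour past time $t$ must coincide with the dual optimizer of the conditional problem, which by uniqueness and the scaling in $\Phi_0$ is governed by $\hat a^{T-t}$, $\hat l^{T-t}$ with effective initial position $\hat\Phi_t$. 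Finally, feed this identification into the open-loop expression~\eqref{eq:openloop} written for horizon $T-t$ and initial position $\hat\Phi_t$, using $S_T-S_t$ in place of $S_T$, which yields exactly~\eqref{eq:13}.

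The main obstacle I anticipate is \emph{step three}: carefully tracking how the infinite-dimensional memory of $\hat\theta$ decomposes across time $t$. One must show that $\hat\Phi_t$ is precisely the right "sufficient statistic" for the past — i.e., that all the $\mathcal G^\Delta_t$-measurable history entering the post-$t$ dynamics enters only through the scalar $\hat\Phi_t$ and the observed path segment $(S_{t+s})_{s\in[0,\Delta]}$, not through the entire trajectory of $\hat W$ on $[0,t]$. This hinges on the linearity of the drift part of $\hat\theta$ in $\Phi_0$ together with the fact that the feedback law $\hat\phi_t=(\mathbb E_{\hat{\mathbb Q}}[S_T|\mathcal G^\Delta_t]-S_t)/\Lambda$ already collapses the relevant history into $\hat\Phi_t$ via~\eqref{eq:openloop}; a clean way to make this rigorous is to argue by uniqueness of the dual optimizer rather than by brute-force manipulation of the Volterra resolvent — that is, to show that \emph{some} measure of the shifted type attains the conditional dual infimum, and then appeal to Lemma~\ref{lem:3}'s uniqueness to conclude it must be the conditional version of $\hat{\mathbb Q}$. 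The remaining computations — substituting the shifted kernels into $\mathcal I^{T-t}_0$ and matching terms — are routine.
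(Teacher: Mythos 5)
Your overall target---reduce the statement at time $t$ to the time-$0$ open-loop formula for the shifted problem with horizon $T-t$ and position $\hat\Phi_t$---is the right one, but the route you choose, a time-consistency property of the \emph{dual} optimizer, breaks down at its central step. The claim in your fourth step, that the conditional law of $\hat{\mathbb Q}$ on $[t,T]$ is the dual optimizer of the shifted problem with position $\hat\Phi_t$, is false for $\Delta>0$, and the kernel shift-invariance you propose to "verify directly" already fails: from \eqref{eq:4} one has $\hat a^T_{t+s}=\alpha\cosh(\sqrt\rho(T-t-s))/\cosh(\sqrt\rho T)\neq \hat a^{T-t}_s$, and, decisively, from \eqref{eq:5} the kernel $\hat l^T_{t+s,t+r}$ carries the weight $(t+r)\wedge\Delta$ while $\hat l^{T-t}_{s,r}$ carries $r\wedge\Delta$; these differ for every $r<\Delta$ when $t>0$. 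Since a measure of the form \eqref{eq:hattheta} is uniquely determined by (and determines) its Girsanov drift, whose It\^o decomposition past $t$ is unique, the conditional law of $\hat{\mathbb Q}$ past $t$ and the shifted-problem optimizer are genuinely different measures, so no appeal to uniqueness of the dual minimizer can identify them. The structural reason is that the dual problem is not time-consistent in the sense you need: the portion of the dual objective \eqref{eq:1} accumulated on $[0,t]$ involves $\mathbb E_{\mathbb Q}[S_T|\mathcal G^\Delta_s]$ for $s\le t$ and hence depends on the conditional behaviour of $\mathbb Q$ \emph{after} $t$; minimizing the global objective is therefore not the same as minimizing the post-$t$ part conditionally, and the discrepancy $(t+r)\wedge\Delta$ versus $r\wedge\Delta$ records exactly this coupling (increments occurring shortly after $t$ are exploited by the global optimizer already before time $t$, which a trader starting afresh at $t$ cannot do).

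What does survive---and what the paper proves---is time-consistency of the \emph{primal} problem: only the conditional expectations $\mathbb E_{\hat{\mathbb Q}}[S_T|\mathcal G^\Delta_{t+s}]$, not the measure $\hat{\mathbb Q}$ itself, must agree with those of the shifted optimizer, and this is established directly on the primal side. The paper views the right-hand side of \eqref{eq:openloop} as a continuous functional $\Psi^T(\Phi_0,W|_{[0,T]})$ yielding the unique optimal strategy, assumes \eqref{eq:13} fails at some $t_0$ with positive probability, and switches at $t_0$ from $\hat\phi$ to $\Psi^{T-t_0}(\hat\Phi_{t_0},W_{\cdot+t_0}-W_{t_0})$; the Markov property of $W$, continuity of the trading rates, and uniqueness of the conditionally optimal policy then produce a strict improvement in expected utility, contradicting optimality of $\hat\phi$. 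To repair your proposal you would have to replace the dual-side identification by such a primal verification/switching argument (i.e.\ a conditional optimality statement for strategies rather than for measures); as written, the key intermediate claim is not merely unproven but wrong.
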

\begin{proof}
  The righthand side of~\eqref{eq:openloop} gives us for each time
  horizon $T$ a continuous functional
  $\Psi^T:\mathbb R\times C[0,T]\rightarrow C[0,T]$ such that for any
  initial stock position $\Phi_0 \in \mathbb R$ and any stock price
  evolution $W$, $\Psi^T(\Phi_0,W|_{[0,T]})$ is the correspondingly
  optimal strategy $\hat{\phi}$ for the utility maximization problem.

Assume by contradiction that the statement of our lemma does not hold.
Then, by continuity of sample paths of $\hat{\phi}$, there exists $t_0\in [0,T]$ such
that with positive probability $\hat{\phi}_{t_0}$ does not coincide with the
righthand side of~\eqref{eq:13}.  Now consider the strategy $\tilde{\phi}$
that coincides with
$\hat{\phi}$ up to time $t_0$ when it switches to 
$$\tilde\phi_t:=\Psi^{T-t_0}\left(\hat\Phi_{t_0},W_{.+t_0}-W_{t_0}\right)_{t-t_0},
\quad t \in [t_0,T].$$
For any strategy $\phi$, we can write the contribution over the interval $[t_0,T]$ to the resulting terminal wealth as
\begin{align}
  V^{\Phi_0,{\phi}}_T - V^{\Phi_0,{\phi}}_{t_0}
  =\hat\Phi_{t_0}(S_T-S_{t_0})+\int_{t_0}^T{\Phi}_t(S_T-S_t)dt-\frac{\Lambda}{2}\int_{t_0}^T{\phi}^2_tdt=:V_{[t_0,T]}^{\Phi_{t_0},,\phi},
\end{align}
where $\Phi_{t_0}:=\Phi_0+\int_0^{t_0}\phi_t dt$. Of course,
$\tilde{\Phi}_{t_0}:=\Phi_0+\int_0^{t_0}\tilde{\phi}_t
dt=\hat{\Phi}_{t_0}$. So, by the Markov property of Brownian motion
and choice of $\tilde{\phi}$ as the unique optimal policy as of time
$t_0$, this allows us to observe that
\begin{align}
  \mathbb{E}\left[-\exp\left(-\alpha V_{[t_0,T]}^{\tilde{\Phi}_{t_0},\tilde{\phi}}\right)\middle|\mathcal
                                                             G^\Delta_{t_0}\right]
 \geq  \mathbb{E}\left[-\exp\left(-\alpha V_{[t_0,T]}^{\hat{\Phi}_{t_0},\hat{\phi}}\right)\middle|\mathcal
                                                             G^\Delta_{t_0}\right],
\end{align}
with ``$>$'' holding on $\{\hat{\phi}_{t_0} \not= \tilde{\phi}_{t_0}\}$
(i.e.\ where~\eqref{eq:13} is violated) because continuity of
$\hat{\phi}$ and $\tilde{\phi}$ ensures that they will differ on an
open interval once they differ at all. Since by assumption this
happens with positive probability,  it follows for the unconditional expected
utility from $\tilde{\phi}$ that
\begin{align}
\mathbb{E} \left[-\exp(-\alpha
  V^{\Phi_0,\tilde{\phi}}_T)\right]&=\mathbb{E} \left[\exp(-\alpha V_{t_0}^{\Phi_0,\tilde{\phi}})  \mathbb{E}\left[-\exp\left(-V_{[t_0,T]}^{\tilde{\Phi}_{t_0},\tilde\phi}\right)\middle|\mathcal
                                                             G^\Delta_{t_0}\right]\right]\\&>\mathbb{E} \left[\exp(-\alpha V_{t_0}^{\Phi_0,\tilde{\phi}})  \mathbb{E}\left[-\exp\left(-V_{[t_0,T]}^{\hat{\Phi}_{t_0},\hat\phi}\right)\middle|\mathcal
                                                             G^\Delta_{t_0}\right]\right]\\&=\mathbb{E} \left[-\exp(-\alpha
  V^{\Phi_0,\hat{\phi}}_T)\right],\label{eq:26}  
\end{align}
contradicting the optimality of $\hat{\phi}$.
\end{proof}

 As a consequence of this dynamic programming result, it
 suffices to verify our feedback policy
 description~\eqref{eq:feedback} for time $t=0$:

 \begin{lem}\label{lem:policy1}
   The optimal initial turnover rate is
 \begin{align}\label{eq:17}
\hat{\phi}_0=
&\frac{1}{1
+\Delta\sqrt{\rho}\tanh(\sqrt{\rho}(T-\Delta)^{+})}\frac{S_{\Delta \wedge T}}{\Lambda}
\\&+\frac{\sqrt{\rho}}{\coth(\sqrt{\rho}(T-\Delta)^{+}) +\Delta\sqrt{\rho}}
\int_0^{\Delta \wedge T} \frac{S_{s}}{\Lambda}ds-\frac{S_0}{\Lambda}\nonumber\\
                          &+\frac{\sqrt{\rho}}{\coth(\sqrt{\rho}(T-\Delta)^{+}) +\Delta\sqrt{\rho}}
                            \left(\frac{\mu}{\alpha\sigma^2}-\Phi_0\right).\nonumber
  \end{align}
 \end{lem}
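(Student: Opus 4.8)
The plan is to evaluate the open-loop formula~\eqref{eq:openloop} at $t=0$. Thanks to the reduction~\eqref{eq:wlog} it suffices to do this for $S=W$, where $\mu/(\alpha\sigma^2)=0$ and $S_0=0$ so that~\eqref{eq:17} reduces to $\hat{\phi}_0=\tfrac{1}{\Lambda}\big(w_1 W_{\Delta\wedge T}+w_2\int_0^{\Delta\wedge T}W_s\,ds-W_0\big)-w_2\Phi_0$ with $w_1:=1/(1+\Delta\sqrt{\rho}\tanh(\sqrt{\rho}(T-\Delta)^+))$ and $w_2:=\sqrt{\rho}/(\coth(\sqrt{\rho}(T-\Delta)^+)+\Delta\sqrt{\rho})$; the general statement then follows from the parameter and measure transformations of Section~\ref{sec:proof} (in particular replacing $\Phi_0$ by $\Phi_0-\mu/(\alpha\sigma^2)$). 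By~\eqref{eq:openloop} this amounts to computing $\mathcal{I}_0^T(W)$ and $\mathcal{I}_0^T(\int_0^\cdot\hat{a}^T_s\,ds)-\int_0^T\hat{a}^T_u\,du$. The key preliminary remark is that on the integration range $[0,\Delta\wedge T]$ of $\mathcal{I}_0^T$ one always has $s\wedge\Delta=s$, so the kernels of Lemma~\ref{lem:2} simplify there.

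The next step is to record those simplifications. Set $D(s):=\cosh(\sqrt{\rho}(T-s))+\sqrt{\rho}\,s\sinh(\sqrt{\rho}(T-s))$ for $s\in[0,\Delta\wedge T]$. A direct differentiation gives $D'(s)=-\rho s\cosh(\sqrt{\rho}(T-s))=-\hat{l}_{s,s}D(s)$, so $\hat{l}_{u,u}=-(\log D)'(u)$ and $\exp(\int_r^s\hat{l}_{u,u}\,du)=D(r)/D(s)$, whence~\eqref{eq:hatkappa} takes the clean form $\hat{\kkernel}_{s,r}=-\rho r\cosh(\sqrt{\rho}(T-s))/D(s)$ for $0\le r\le s\le\Delta\wedge T$; moreover $1-\int_s^T\hat{l}_{u,s}\,du=\cosh(\sqrt{\rho}(T-s))/D(s)=:w(s)$. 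Substituting these into $\mathcal{I}_0^T(W)$ writes it as $\int_0^{\Delta\wedge T}w(s)\,dW_s$ plus a double stochastic integral, and a stochastic Fubini step turns the latter into a single It\^o integral against $dW_r$. The computation hinges on the identity $\frac{d}{ds}\big(\sinh(\sqrt{\rho}(T-s))/D(s)\big)=-\sqrt{\rho}\cosh^2(\sqrt{\rho}(T-s))/D(s)^2$, which makes the inner $s$-integral explicit, and on $w(r)+\sqrt{\rho}\,r\sinh(\sqrt{\rho}(T-r))/D(r)=D(r)/D(r)=1$, which then collapses everything to $\mathcal{I}_0^T(W)=W_{\Delta\wedge T}-\big(\sqrt{\rho}\sinh(\sqrt{\rho}(T-\Delta)^+)/D(\Delta\wedge T)\big)\int_0^{\Delta\wedge T}r\,dW_r$. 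One integration by parts, $\int_0^{\Delta\wedge T}r\,dW_r=(\Delta\wedge T)W_{\Delta\wedge T}-\int_0^{\Delta\wedge T}W_r\,dr$, together with a rewriting of $\cosh(\sqrt{\rho}(T-\Delta)^+)/D(\Delta\wedge T)$ and $\sqrt{\rho}\sinh(\sqrt{\rho}(T-\Delta)^+)/D(\Delta\wedge T)$ in terms of $\tanh$ and $\coth$ of $\sqrt{\rho}(T-\Delta)^+$, then identifies the coefficients as $w_1$ and $w_2$, i.e.\ $\mathcal{I}_0^T(W)=w_1W_{\Delta\wedge T}+w_2\int_0^{\Delta\wedge T}W_s\,ds$.

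For the deterministic $\Phi_0$-term the same $D$-identities plus $\int_0^s r\cosh(\sqrt{\rho}(T-r))\,dr=\tfrac{1}{\rho}(\cosh(\sqrt{\rho}T)-D(s))$ show that the inner expression $\hat{a}_s\,ds-\int_0^s\hat{\kkernel}_{s,r}\hat{a}_r\,dr\,ds$ equals $\alpha w(s)\,ds$, so that $\mathcal{I}_0^T(\int_0^\cdot\hat{a}_s\,ds)=\alpha\int_0^{\Delta\wedge T}w(s)^2\,ds$. Evaluating this with the antiderivative above and subtracting $\int_0^T\hat{a}_u\,du=\alpha\tanh(\sqrt{\rho}T)/\sqrt{\rho}$ leaves $-\tfrac{\alpha}{\sqrt{\rho}}\sinh(\sqrt{\rho}(T-\Delta)^+)/D(\Delta\wedge T)$; dividing by $\Lambda$ and using $\alpha/\Lambda=\rho$ this equals $-w_2$. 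Collecting the pieces and using $W_0=0$ gives $\hat{\phi}_0$ in the normalized model in the form displayed above; undoing~\eqref{eq:wlog} then yields~\eqref{eq:17}.

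The main obstacle is organizational rather than conceptual: executing the stochastic Fubini step carefully and, above all, spotting and exploiting the two algebraic cancellations — that $w(r)+\sqrt{\rho}\,r\sinh(\sqrt{\rho}(T-r))/D(r)\equiv 1$ and that the $\hat{\kkernel}$-correction renormalizes $\hat{a}_s$ exactly to $\alpha w(s)$ — without which $\mathcal{I}_0^T$ does not evaluate in closed form. One must also carry $(T-\Delta)^+$ consistently through the regimes $\Delta\le T$ and $\Delta>T$ (in the latter all $\tanh$-terms vanish and the frontrunning contribution disappears, consistently with the pointwise optimum $(S_T-S_t)/\Lambda$ noted after Theorem~\ref{thm2.1}), and remember throughout that $\hat{l}^T,\hat{\kkernel}^T,\hat{a}^T$ are the kernels for horizon $T$.
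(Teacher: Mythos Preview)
Your argument is correct and follows the same overall strategy as the paper: evaluate the open-loop formula~\eqref{eq:openloop} at $t=0$ by computing $\mathcal{I}_0^T$ explicitly, then integrate by parts and undo the reduction~\eqref{eq:wlog}. The organizational difference is that the paper reaches the reduced form $\mathcal{I}_0^T(X)=\int_0^{\Delta\wedge T}(1-f_T(s))\,dX_s$ (with $f_T(s)$ linear in $s$, i.e.\ your $cs$) via the resolvent identity~\eqref{eq:6} together with the multiplicative separability of $\hat{l}$, treating $X=W$ and $X=\int_0^\cdot\hat a_u\,du$ uniformly; you instead parametrize everything through the auxiliary function $D(s)$, observe $\hat l_{u,u}=-(\log D)'(u)$ and the antiderivative identity for $\sinh(\sqrt{\rho}(T-s))/D(s)$, and handle the stochastic and deterministic pieces by separate (but closely related) cancellations. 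Your route is more hands-on and avoids invoking~\eqref{eq:6} directly, at the price of treating the two $X$'s differently; the paper's route is slightly more structural and produces a single formula for $\mathcal{I}_0^T$ valid for any $X$. Both yield the same intermediate expression and the same final answer.
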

 \begin{proof}
   In view of~\eqref{eq:openloop}, we need to compute for $X=W$ and
   $X=\int_0^.\hat{a}_udu$ the operator
 \begin{align}
   \mathcal{I}_{0}^T\left(X\right) &=
                        \int_0^{\Delta \wedge T} \left(1-\int_s^T
                                                                \hat{l}_{t,s}dt\right)\left(dX_s-\int_{0}^s \hat\kkernel_{s,r}
                                                                d X_r
                        ds\right)\\\label{eq:15}
   &= \int_0^{\Delta \wedge T} \left(1-\int_s^T  \hat{l}_{t,s}dt\right)dX_s-I
 \end{align}
 with
 \begin{align}
  I&:=\int_0^{\Delta \wedge T} \left(1-\int_s^T\hat{l}_{t,s}dt\right)
                             \int_{0}^s \hat\kkernel_{s,r} d X_r ds\\  \label{eq:16}
                           &= \int_0^{\Delta \wedge T}
                             \left(\int_r^{\Delta\wedge T}
                                \hat{\kkernel}_{s,r}ds
                             -\int_r^T\int_r^{t\wedge \Delta}
                             \hat{l}_{t,s}\hat{\kkernel}_{s,r}ds \ dt
                             \right)  dX_r
 \end{align}
 where the last equality is due to Fubini's theorem. For
 $t \in [r,\Delta \wedge T]$, the kernel identity~\eqref{eq:6} shows
 that the second $ds$-integral in~\eqref{eq:16} gives
 $\hat{\kkernel}_{t,r}+\hat{l}_{t,r}$. For $t \in (\Delta \wedge T,T]$,
 we note that $\Delta<T$ and we let $n_t$ denote the numerator in the
 definition of $\hat{l}_{t,.}$ in~\eqref{eq:5} to write
 $\hat{l}_{t,r}=\hat{l}_{\Delta,r} n_t/n_{\Delta}$. It follows by another use of the kernel identity~\eqref{eq:6}
 that for such $t$ the second $ds$-integral above amounts to
 \begin{align}
   \label{eq:19}
   \int_r^{\Delta} \hat{l}_{t,s}\hat{\kkernel}_{s,r}ds & =
     \int_r^{\Delta} \hat{l}_{\Delta,s}\hat{\kkernel}_{s,r}ds
                                                       \frac{n_t}{n_{\Delta}}
                                                     = \left(\hat{\kkernel}_{\Delta,r}+\hat{l}_{\Delta,r}\right) \frac{n_t}{n_{\Delta }}
                                                     \\ & =\hat{l}_{t,r}-\exp\left(\int_r^{\Delta}\hat{l}_{u,u}du\right)\hat{l}_{t,r}
 \end{align}
 where we used~\eqref{eq:hatkappa} in the final step. Plugging all
 this into~\eqref{eq:16}, we see that the contribution to the
 $dt$-integral from $[r,\Delta \wedge T]$ is partially cancelled by
 the first $ds$-integral there, leaving us with
 \begin{align}
   \label{eq:20}
     I=\int_0^{\Delta \wedge T}\left(-\int_r^T
   \hat{l}_{t,r}dt+\int_{\Delta \wedge T}^T\exp\left(\int_r^{\Delta}\hat{l}_{u,u}du\right)\hat{l}_{t,r}dt\right)dX_r.
 \end{align}
 Inserting this into~\eqref{eq:15}, we see a cancellation of integrals
 over $\hat{l}$ and arrive at
 \begin{align}
   \mathcal{I}_{0}^T\left(X\right) &=
                        \int_0^{\Delta \wedge T} \left(1-\int_{\Delta
                        \wedge
                        T}^T\exp\left(\int_s^{\Delta}\hat{l}_{u,u}du\right)\hat{l}_{t,s}dt\right)dX_s\\\label{eq:21}
   &=\int_0^{\Delta \wedge T} \left(1-f_T(s)\right)dX_s
 \end{align}
 where in view of \eqref{eq:5} we have
  \begin{align}\label{eq:fT}
    f_{T}(s) := &\exp\left(\int_s^{\Delta \wedge T}
                         \frac{u\rho}{1+u\sqrt{\rho}\tanh(\sqrt{\rho}(T-u))}du\right) \\
    & \qquad \cdot \frac{s\sqrt{\rho}\sinh(\sqrt{\rho}(T-\Delta)^+)}{\cosh(\sqrt{\rho}(\tau-s)) +s\sqrt{\rho}\sinh(\sqrt{\rho}(T-s))}\\
    &=\frac{s\sqrt{\rho}\sinh(\sqrt{\rho}(T-\Delta)^+)}{\cosh(\sqrt{\rho}(T-\Delta)^{+}) +\Delta\sqrt{\rho}\sinh(\sqrt{\rho}(T-\Delta)^{+})}.
  \end{align}
 Now we apply~\eqref{eq:21} to $X=W$ and $X=\int_.^T\hat{a}_udu$ to
 rewrite the open loop description~\eqref{eq:openloop} of
 $\hat{\phi}_0$ as
 \begin{align}
   \hat{\phi}_0 =&
\frac{W_{\Delta \wedge T}}{\Lambda}    (1-f_T(\Delta \wedge T))
                            +\int_0^{\Delta \wedge T} \frac{W_{s}}{\Lambda}  f'_T(s)ds\\
                          &-
                            \Phi_0\left(
                            \int_{\Delta \wedge
                            T}^T\frac{\hat{a}_u}{\Lambda}du    (1-f_T(\Delta \wedge T))+\int_0^{\Delta\wedge T}\int_{s}^T\frac{\hat{a}_u}{\Lambda}du f'_T(s)ds\right).
 \end{align}
We conclude the claimed representation for the optimal policy~\eqref{eq:17} by inserting~\eqref{eq:fT} and
 \begin{align}
   \label{eq:23}
   \int_{s}^T\frac{\hat{a}_u}{\Lambda}du =
   \frac{\sqrt{\rho}\sinh(\sqrt{\rho}(T-s))}{\cosh(\sqrt{\rho}T)}, \quad s \in [0,T],
 \end{align}
in the above formula for $\hat{\phi}_0$. 

As a final step, we need to recall the simplifying steps from the
beginning of this chapter where we reduced everything to the case
$S=W$ underpinning our calculations so far. Reversing these steps then
leads to the formulae given in the present lemma which work for the
general case required in our main theorem.
\end{proof}

\appendix{}
\numberwithin{equation}{section}
\section{Duality}\label{appendix}

In this appendix we develop the duality theory for the utility
maximization problem~\eqref{problem} not just for the Bachelier model
discussed in the rest of the paper, but for any c\`adl\`ag price
process $S=(S_t)_{t \in [0,T]}$ on a filtered probability space
$(\Omega,\mathcal{F},\mathbb{P})$ equipped with the (completed,
right-continuous) filtration $(\mathcal{G}_{t})_{t\in [0,T]}$ to which
$S$ is adapted. Expectation of a real-valued random variable
$X$ with respect to some probability ${R}$ on $\mathcal{F}$ is denoted $\mathbb{E}_{R}[X]$ where
the index is dropped when $\mathbb{Q}=\mathbb{P}$. Sometimes we also use the shorthand notation
$\mathbb{E}_{{R},t}[X]$ (resp.\ $\mathbb{E}_{t}[X]$) instead of $\mathbb{E}_{R}[X\vert\mathcal{G}_{t}]$
(resp.\ $\mathbb{E}[X\vert\mathcal{G}_{t}]$).

Define the set of admissible strategies by
$$
\mathcal{A}:=\left\{(\phi_{t})_{t\in [0,T]}:\
\int_{0}^{T}\phi_{t}^{2}\, dt<\infty\mbox{ almost surely, } \phi\mbox{ is an optional process}
\right\}
$$
and for each $\phi\in\mathcal{A}$ define the corresponding portfolio value at time $T$ by
$$
V^{\Phi_{0},\phi}:=\Phi_{0}(S_{T}-S_{0})+\int_{0}^{T}(S_{T}-S_{t})\phi_{t}\, dt-
\frac{\Lambda}{2}\int_{0}^{T}\phi_{t}^{2}\, dt,
$$
where $\Lambda>0$ is a given constant characterizing the strength of price impact.

\begin{asm}\label{integrability} There is $a>0$ such that
$$
\sup_{t\in [0,T]}\mathbb{E}[\exp(a S_{t}^{2})]<\infty.
$$
\end{asm}

\begin{prop}\label{propapp} Let Assumption \ref{integrability} be in force.
Denoting by $\mathcal Q$ the set of all probability measures $\mathbb Q\sim\mathbb P$ with finite
entropy
$$\mathbb E_{\mathbb Q}\left[\log\left(\frac{d\mathbb Q}{d\mathbb P}\right) \right]<\infty$$
relative to $\mathbb P$, we have
\begin{align}
&\max_{\phi\in\mathcal A}\left\{-\frac{1}{\alpha}\log \mathbb{E}\left[\exp\left(-\alpha V^{\Phi_0,\phi}_T\right)\right]\right\}\nonumber\\
&=\inf_{\mathbb Q\in\mathcal Q} \mathbb{E}_{\mathbb Q}\left[\Phi_0(S_T-S_0)+\frac{1}{\alpha}\log\left(\frac{d\mathbb Q}{d\mathbb P}\right) +\frac{1}{2\Lambda}\int_{0}^T\left|
\mathbb{E}_{\mathbb Q}[S_T|\mathcal G_t]-S_t\right|^2 dt\right].\label{eq:1masik}
\end{align}
Furthermore, there is a unique minimizer $\hat{\mathbb Q}$ for the dual
problem and the process given by
\begin{equation}\label{portfoliomasik}
\hat{\phi}_t:=\frac{\mathbb{E}_{\hat{\mathbb Q}}\left[S_T|\mathcal G_t\right]-S_t}{\Lambda}, \quad t\in [0,T],
\end{equation}
is the unique optimal portfolio for the primal problem.
\end{prop}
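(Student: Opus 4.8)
The plan is to combine the Gibbs variational principle with a pointwise optimisation in the control for the ``easy'' inequality, to produce the dual minimiser by a convex--compactness argument, and then to read off the optimal portfolio from its first--order conditions. For a fixed $\phi\in\mathcal A$, the Gibbs (Donsker--Varadhan) variational formula gives $-\tfrac1\alpha\log\mathbb E[\exp(-\alpha V^{\Phi_0,\phi}_T)]=\inf_{\mathbb Q}\{\mathbb E_{\mathbb Q}[V^{\Phi_0,\phi}_T]+\tfrac1\alpha\mathbb E_{\mathbb Q}[\log(d\mathbb Q/d\mathbb P)]\}$, the infimum over all $\mathbb Q\ll\mathbb P$. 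Conditioning on $\mathcal G_t$ inside $\mathbb E_{\mathbb Q}[\int_0^T(S_T-S_t)\phi_t\,dt]$ and then completing the square in each $\phi_t$ against the penalty $-\tfrac\Lambda2\phi_t^2$ yields $\mathbb E_{\mathbb Q}[V^{\Phi_0,\phi}_T]\le \Phi_0\mathbb E_{\mathbb Q}[S_T-S_0]+\tfrac1{2\Lambda}\mathbb E_{\mathbb Q}[\int_0^T(\mathbb E_{\mathbb Q}[S_T|\mathcal G_t]-S_t)^2\,dt]$, with equality exactly when $\phi_t=(\mathbb E_{\mathbb Q}[S_T|\mathcal G_t]-S_t)/\Lambda$. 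Hence the left-hand side of~\eqref{eq:1masik} is $\le$ the dual functional at every $\mathbb Q\in\mathcal Q$, so $\mathrm{primal}\le\mathrm{dual}$; taking $\mathbb Q=\mathbb P$ shows both sides are finite. Throughout, Assumption~\ref{integrability} together with the entropy inequality $\mathbb E_{\mathbb Q}[|Y|]\le\tfrac1a(\mathbb E_{\mathbb Q}[\log(d\mathbb Q/d\mathbb P)]+\log\mathbb E[e^{a|Y|}])$ makes $S_T$ and $S_T^2$ integrable under every $\mathbb Q\in\mathcal Q$, so all these quantities are well defined; the residual integrability bookkeeping (Fubini, the $\infty-\infty$ issue for controls with $\mathbb E_{\mathbb Q}[\int\phi^2]=\infty$) is routine.

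Next, write $J(\mathbb Q)$ for the dual functional in~\eqref{eq:1masik}. It is convex, because the relative entropy is convex and the conditional-expectation penalty equals $\sup_{\phi\in\mathcal A}\mathbb E_{\mathbb Q}[\int_0^T(S_T-S_t)\phi_t\,dt-\tfrac\Lambda2\int_0^T\phi_t^2\,dt]$, a supremum of functionals affine in the density $d\mathbb Q/d\mathbb P$; and it is strictly convex because the entropy term already is. Using the entropy inequality again, the relative entropy stays bounded on every sublevel set $\{J\le c\}$, so the associated densities form a de la Vall\'ee-Poussin / Dunford--Pettis (equivalently Koml\'os) compact family; passing to a minimiser then requires checking that no mass escapes in the limit and that $J$ is lower semicontinuous along the extracted sequence, for which the sup-of-affine representation of the penalty and the $\mathbb Q$-uniform integrability of $\{(d\mathbb Q/d\mathbb P)S_T\}$ coming from Assumption~\ref{integrability} are the decisive inputs. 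This existence step is the main analytic obstacle. Strict convexity gives uniqueness; denote the minimiser $\hat{\mathbb Q}$ and put $\hat\phi_t:=(\mathbb E_{\hat{\mathbb Q}}[S_T|\mathcal G_t]-S_t)/\Lambda$, which lies in $\mathcal A$ since $J(\hat{\mathbb Q})<\infty$ forces $\mathbb E_{\hat{\mathbb Q}}[\int_0^T\hat\phi_t^2\,dt]<\infty$ and a c\`adl\`ag choice of the $\hat{\mathbb Q}$-martingale $\mathbb E_{\hat{\mathbb Q}}[S_T|\mathcal G_{\cdot}]$ makes $\hat\phi$ optional.

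To identify $\hat{\mathbb Q}$, perturb it by $d\mathbb Q^\varepsilon/d\mathbb P=(d\hat{\mathbb Q}/d\mathbb P)(1+\varepsilon\eta)$ for bounded $\eta$ with $\mathbb E_{\hat{\mathbb Q}}[\eta]=0$ (admissible for small $|\varepsilon|$) and differentiate $\varepsilon\mapsto J(\mathbb Q^\varepsilon)$ at $0$, using that $\mathbb E_{\mathbb Q^\varepsilon}[S_T|\mathcal G_t]$ is differentiable in $\varepsilon$ with derivative $\mathbb E_{\hat{\mathbb Q}}[\eta S_T|\mathcal G_t]-m_t\mathbb E_{\hat{\mathbb Q}}[\eta|\mathcal G_t]$, where $m_t:=\mathbb E_{\hat{\mathbb Q}}[S_T|\mathcal G_t]$. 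After the algebraic identity $\tfrac1{2\Lambda}(m_t-S_t)^2+\tfrac1\Lambda(m_t-S_t)(S_T-m_t)=(S_T-S_t)\hat\phi_t-\tfrac\Lambda2\hat\phi_t^2$, the three derivative contributions collapse to $\mathbb E_{\hat{\mathbb Q}}[\eta(V^{\Phi_0,\hat\phi}_T+\tfrac1\alpha\log(d\hat{\mathbb Q}/d\mathbb P))]=0$. Since $\eta$ is an arbitrary bounded mean-zero variable and $V^{\Phi_0,\hat\phi}_T+\tfrac1\alpha\log(d\hat{\mathbb Q}/d\mathbb P)\in L^1(\hat{\mathbb Q})$, this random variable is $\hat{\mathbb Q}$-a.s.\ (hence $\mathbb P$-a.s.) constant, i.e.\ $d\hat{\mathbb Q}/d\mathbb P=e^{-\alpha V^{\Phi_0,\hat\phi}_T}/\mathbb E[e^{-\alpha V^{\Phi_0,\hat\phi}_T}]$, and this constant is $-\tfrac1\alpha\log\mathbb E[e^{-\alpha V^{\Phi_0,\hat\phi}_T}]$.

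Finally, using $\mathbb E_{\hat{\mathbb Q}}[(S_T-S_t)\hat\phi_t]=\Lambda\mathbb E_{\hat{\mathbb Q}}[\hat\phi_t^2]$ one checks that the non-entropy part of $J(\hat{\mathbb Q})$ equals $\mathbb E_{\hat{\mathbb Q}}[V^{\Phi_0,\hat\phi}_T]$, so $J(\hat{\mathbb Q})=\mathbb E_{\hat{\mathbb Q}}[V^{\Phi_0,\hat\phi}_T+\tfrac1\alpha\log(d\hat{\mathbb Q}/d\mathbb P)]=-\tfrac1\alpha\log\mathbb E[e^{-\alpha V^{\Phi_0,\hat\phi}_T}]$. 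Since this last expression is attained by the admissible control $\hat\phi$, it is $\le\mathrm{primal}$, and combined with weak duality we obtain $\mathrm{primal}=\mathrm{dual}=J(\hat{\mathbb Q})$, so $\hat\phi$ is primal-optimal. Uniqueness of the optimal portfolio follows because any primal optimiser $\phi^\ast$ renders the whole chain of inequalities tight: tightness of the Gibbs step forces $d\hat{\mathbb Q}/d\mathbb P\propto e^{-\alpha V^{\Phi_0,\phi^\ast}_T}$, and tightness of the pointwise optimisation then forces $\phi^\ast_t=(\mathbb E_{\hat{\mathbb Q}}[S_T|\mathcal G_t]-S_t)/\Lambda=\hat\phi_t$ for $dt\otimes d\mathbb P$-a.e.\ $(t,\omega)$.
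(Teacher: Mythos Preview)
Your overall architecture is sound and takes a genuinely different route from the paper at the decisive step. Both arguments obtain weak duality from the Fenchel/Gibbs inequality and produce the dual minimiser by a Koml\'os-type compactness argument combined with strict convexity. The divergence is in how the duality gap is closed. The paper computes only the \emph{one-sided} derivative $F'_{0+}\ge 0$ along segments $s\mapsto s\xi+(1-s)\underline\xi$, extracts from it an inequality of the form
\[
\mathbb E_{R(\xi)}[-\ln\underline\xi]\le \underline J+\tfrac14\,\mathbb E_{R(\xi)}\!\left[\int_0^T(\mathbb E_{R(\xi),t}[S_T]-S_t)^2\,dt\right]
\]
for all $\xi$ dominated by a multiple of $\underline\xi$, and then invokes a superhedging theorem (Theorem~\ref{superhedging}, imported from an external reference) to conclude that $\underline J-\ln\underline\xi\le V(\phi^\dagger)$ for some admissible $\phi^\dagger$. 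You instead perform a \emph{two-sided} perturbation $d\mathbb Q^\varepsilon=(1+\varepsilon\eta)\,d\hat{\mathbb Q}$ and read off from the vanishing first derivative directly that $V^{\Phi_0,\hat\phi}_T+\tfrac1\alpha\log(d\hat{\mathbb Q}/d\mathbb P)$ is $\hat{\mathbb Q}$-a.s.\ constant; this identifies $d\hat{\mathbb Q}/d\mathbb P$ as the Gibbs density for $\hat\phi$ and closes the gap without any superhedging machinery. That is more self-contained and conceptually cleaner.

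Two points to firm up. First, the two-sided perturbation presupposes $\hat{\mathbb Q}\sim\mathbb P$, i.e.\ strict positivity of the limiting density; your compactness argument (sublevel sets of $J$, de la Vall\'ee-Poussin, Koml\'os) a priori only yields an absolutely continuous limit. The paper handles positivity by a separate one-sided derivative argument against a strictly positive competitor $\xi_0$ (Proposition~\ref{minimizer}); you should insert the same step before invoking your first-order condition. Second, your ``supremum of affine functionals'' justification of lower semicontinuity for the penalty is not quite complete as stated: the linear maps $\xi\mapsto\mathbb E[\xi\int_0^T((S_T-S_t)\phi_t-\tfrac{\Lambda}{2}\phi_t^2)\,dt]$ are not $L^1$-continuous unless the integrand is bounded, so one should either restrict to bounded $\phi$ (which still recovers the full supremum by truncation) or, as the paper does, prove $\mathbb E_{R(\tilde\xi_n),t}[S_T]\to\mathbb E_{R(\underline\xi),t}[S_T]$ in probability via uniform integrability of $\tilde\xi_n S_T$ and then apply Fatou.
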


We will prove Proposition \ref{propapp} at the end of this section, after suitable preparations.
In the rest of this section we assume $\alpha=1$, $\Lambda=2$, $\Phi_{0}=0$ for simplicity
and will write $V(\phi)$ instead of $V^{0,\phi}$.
The general case is only notationally more involved.

We first state the ``difficult'' direction of the superhedging theorem in the present
context: it provides a sufficient condition for a claim to be superhedged by
a suitable strategy.

\begin{thm}\label{superhedging}
Let $\mathbb{Q}\sim \mathbb{P}$ be a probability such that
\begin{equation}\label{hol}
\sup_{t\in [0,T]}\mathbb{E}_{\mathbb{Q}}\left[|S_{t}|\right]<\infty.{}
\end{equation}
Let $W$
be a real-valued random variable with $\mathbb{E}_{\mathbb{Q}}[|W|]<\infty$. If
\begin{equation}\label{igaz}
\mathbb{E}_{{R}}[W]\leq \frac{1}{4}\mathbb{E}_{{R}}\left[\int_{0}^{T}(\mathbb{E}_{{R},t}[S_{T}]-S_{t})^{2}\,dt\right]
\end{equation}
holds for all probabilities ${R}\ll \mathbb{Q}$ with bounded $d{R}/d\mathbb{Q}$
then there exists $\phi\in\mathcal{A}$ such that $V(\phi)\geq W$ almost surely.
\end{thm}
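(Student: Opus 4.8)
The plan is to establish this ``difficult'' direction of superhedging duality by a Hahn--Banach separation in $L^1(\mathbb{Q})$; the decisive feature is that the quadratic penalty $-\int_0^T\phi_t^2\,dt$ supplies exactly the compactness needed to \emph{close} the convex set of superhedgeable claims. We keep the running normalisation $\alpha=1$, $\Lambda=2$, $\Phi_0=0$, so that $V(\phi)=\int_0^T(S_T-S_t)\phi_t\,dt-\int_0^T\phi_t^2\,dt$ and the right-hand side of~\eqref{igaz} equals
\[
J(R):=\tfrac14\,\mathbb{E}_{R}\Big[\int_0^T(\mathbb{E}_{R,t}[S_T]-S_t)^2\,dt\Big]\in[0,\infty].
\]

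\emph{Step 1 (the elementary dual bound and its sharpness).} Fix a probability $R\ll\mathbb{Q}$ with bounded density and write $Z_t:=\mathbb{E}_{R,t}[S_T]-S_t$; note $\int_0^T\mathbb{E}_{R}[|Z_t|]\,dt<\infty$ because $\sup_t\mathbb{E}_{\mathbb{Q}}[|S_t|]<\infty$ by~\eqref{hol}. For any bounded optional $\phi$, conditioning the linear term of $V(\phi)$ on $\mathcal{G}_t$ and using $zy-y^2\le z^2/4$ pointwise gives $\mathbb{E}_{R}[V(\phi)]\le J(R)$. Conversely, the clipped strategies $\phi^{(n)}_t:=\tfrac12\big((-n)\vee Z_t\wedge n\big)\in\mathcal{A}$ are bounded with $V(\phi^{(n)})\in L^1(\mathbb{Q})$, and the $t$-integrand of $\mathbb{E}_{R}[V(\phi^{(n)})]$ is nonnegative, nondecreasing in $n$, and increases to $\tfrac14 Z_t^2$, so monotone convergence yields $\mathbb{E}_{R}[V(\phi^{(n)})]\uparrow J(R)$. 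Thus~\eqref{igaz}, which reads $\mathbb{E}_{R}[W]\le J(R)$, says exactly that $\mathbb{E}_{R}[W]\le\sup_n\mathbb{E}_{R}[V(\phi^{(n)})]$ for every such $R$.

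\emph{Step 2 (separation).} Consider the convex set
\[
\mathcal{C}:=\{X\in L^1(\mathbb{Q}):V(\phi)\ge X\ \text{a.s.\ for some }\phi\in\mathcal{A}\},
\]
which contains $0$ (via $\phi\equiv0$), is convex by concavity of $\phi\mapsto V(\phi)$, and satisfies $\mathcal{C}-L^1_+(\mathbb{Q})\subseteq\mathcal{C}$. Since $\mathbb{E}_{\mathbb{Q}}[|W|]<\infty$, the claim of the theorem is precisely $W\in\mathcal{C}$. Suppose not. Granting (Step~3) that $\mathcal{C}$ is closed in $L^1(\mathbb{Q})$, Hahn--Banach provides $g\in L^\infty(\mathbb{Q})$ and $c\in\mathbb{R}$ with $\sup_{X\in\mathcal{C}}\mathbb{E}_{\mathbb{Q}}[gX]\le c<\mathbb{E}_{\mathbb{Q}}[gW]$; the property $\mathcal{C}-L^1_+\subseteq\mathcal{C}$ forces $g\ge0$, and $0\in\mathcal{C}$ together with $\mathbb{E}_{\mathbb{Q}}[gW]>0$ forces $\mathbb{E}_{\mathbb{Q}}[g]>0$. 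Put $dR/d\mathbb{Q}:=g/\mathbb{E}_{\mathbb{Q}}[g]$, a probability with bounded density. Since $V(\phi^{(n)})\in\mathcal{C}$, we get $c\ge\sup_n\mathbb{E}_{\mathbb{Q}}[g\,V(\phi^{(n)})]=\mathbb{E}_{\mathbb{Q}}[g]\,J(R)$, so in particular $J(R)<\infty$; but then $\mathbb{E}_{R}[W]=\mathbb{E}_{\mathbb{Q}}[gW]/\mathbb{E}_{\mathbb{Q}}[g]>c/\mathbb{E}_{\mathbb{Q}}[g]\ge J(R)$, contradicting~\eqref{igaz}. Hence $W\in\mathcal{C}$, which is the assertion.

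\emph{Step 3 (closedness of $\mathcal{C}$ --- the main obstacle).} Let $X^n\in\mathcal{C}$ with $X^n\to X$ in $L^1(\mathbb{Q})$ and, along a subsequence, a.s.; pick $\phi^n\in\mathcal{A}$ with $V(\phi^n)\ge X^n$. C\`adl\`ag paths are bounded on $[0,T]$, so $c(\omega):=\big(\int_0^T(S_T-S_t)^2\,dt\big)^{1/2}(\omega)<\infty$ for every $\omega$, and the pointwise inequality
\[
V(\phi)(\omega)\le\tfrac14\,c(\omega)^2-\Big(\|\phi(\omega)\|_{L^2([0,T],dt)}-\tfrac12\,c(\omega)\Big)^2
\]
combined with $V(\phi^n)(\omega)\ge X^n(\omega)$ and $X^n(\omega)\to X(\omega)$ yields $\sup_n\|\phi^n(\omega)\|_{L^2([0,T],dt)}<\infty$ for a.e.\ $\omega$ (the tail estimate plus finiteness of each initial term). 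A randomised Koml\'os/Mazur argument --- weak compactness of balls in the Hilbert space $L^2([0,T],dt)$, a measurable selection of a weak cluster point, and a measurable convexification --- then produces forward convex combinations $\psi^n\in\mathrm{conv}(\phi^n,\phi^{n+1},\dots)$ converging for a.e.\ $\omega$, in $L^2([0,T],dt)$-norm, to some $\hat\phi$ with $\int_0^T\hat\phi_t^2\,dt<\infty$ a.s.; $\hat\phi$ admits an optional version, hence $\hat\phi\in\mathcal{A}$. Concavity of $\phi\mapsto V(\phi)$ gives $V(\psi^n)\ge\sum_i\lambda^n_iV(\phi^{m^n_i})\ge\sum_i\lambda^n_iX^{m^n_i}\to X$ a.s., while norm-convergence of $\psi^n(\omega)$ together with $(S_T-S_\cdot)(\omega)\in L^2([0,T],dt)$ gives $V(\psi^n)(\omega)\to V(\hat\phi)(\omega)$ pointwise. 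Hence $V(\hat\phi)\ge X$ a.s., i.e.\ $X\in\mathcal{C}$, so $\mathcal{C}$ is closed. The only genuinely delicate point is this measurable Koml\'os-type extraction from a sequence that is bounded in $L^2([0,T],dt)$ only $\omega$-by-$\omega$ (not uniformly, nor in $L^1(\mathbb{Q})$), together with the optionality bookkeeping for the limit strategy; once the quadratic penalty has bounded the candidate strategies, everything else is routine.
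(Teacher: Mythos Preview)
Your argument is correct and follows exactly the separation-plus-closedness route that underlies Theorem~3.9 of \cite{paolo}, to which the paper's own proof simply defers without further detail. Your Step~3 (the Koml\'os-type extraction exploiting the quadratic penalty, which you rightly flag as the delicate point) is precisely the technical core handled in the cited reference; you have in fact supplied more of the argument than the paper itself.
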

\begin{proof} It follows along the lines of the case treated in Theorem 3.9 of \cite{paolo},
the integrability condition \eqref{hol} being used in the arguments corresponding to those of page 2082 there.	
\end{proof}

Note that the convex conjugate of the function $u(x):=-\exp(-x)$, $x\in\mathbb{R}$ is
$$v(y):=\sup_{x\in\mathbb{R}}[u(x)-xy],\ y\geq 0.
$$
By simple calculations
$v(y)=y\ln y-y$, where the convention $0\ln 0=0$ is used.
The \emph{Fenchel inequality} $u(x)\leq v(y)+xy$ trivially holds for all $x\in\mathbb{R}$ and
$y\geq 0$.

Let $\mathcal{Z}$ denote the set of non-negative random variables $\xi$ such that
$\mathbb{E}[\xi]=1$. For each $\xi$, define the probability $R(\xi)\ll \mathbb{P}${}
by $R(\xi)(A):=\mathbb{E}[\xi 1_{A}]$. Let $\mathcal{Z}_{e}:=\{\xi\in\mathcal{Z}_{e}:\ \mathbb{E}[\xi |\ln\xi|]<\infty\}$.

In the rest of this section, Assumption \ref{integrability} will be in force.
Since we will follow a standard route, described e.g.\ in \cite{kabanov-stricker},
only the main steps of the proofs will be given.

\begin{lem}\label{moments}
For any family $\mathcal Z_0 \subset \mathcal Z$ with $\sup_{\xi \in
  \mathcal Z_0}\mathbb E[\xi \ln \xi]<\infty$, one has
$$
\sup_{\xi \in \mathcal Z_0, t\in [0,T]}\mathbb{E}[\xi S_{t}^{2}]<\infty.
$$	
\end{lem}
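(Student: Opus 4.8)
The plan is to prove the uniform bound on $\mathbb{E}[\xi S_t^2]$ by combining the Fenchel-type inequality with the Gaussian-like integrability of $S_t$ guaranteed by Assumption~\ref{integrability}. The key observation is that for any $\beta > 0$ and any $\xi \in \mathcal{Z}_0$ we can write, via the elementary inequality $xy \le x\ln x - x + e^{y}$ valid for $x \ge 0$ and $y \in \mathbb{R}$ (the Fenchel inequality for $u(x) = -e^{-x}$ applied with a suitable sign),
\begin{align}
\beta\, \mathbb{E}[\xi S_t^2] = \mathbb{E}[\xi \cdot \beta S_t^2] \le \mathbb{E}[\xi \ln \xi] - 1 + \mathbb{E}[\exp(\beta S_t^2)].
\end{align}
Taking $\beta = a$ with $a > 0$ as in Assumption~\ref{integrability}, the last term is bounded by $\sup_{t \in [0,T]} \mathbb{E}[\exp(a S_t^2)] < \infty$ uniformly in $t$, and the first term is bounded by $\sup_{\xi \in \mathcal{Z}_0} \mathbb{E}[\xi \ln \xi] < \infty$ uniformly in $\xi$. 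Dividing by $a$ then gives the claimed uniform bound
\begin{align}
\sup_{\xi \in \mathcal{Z}_0,\, t \in [0,T]} \mathbb{E}[\xi S_t^2] \le \frac{1}{a}\left(\sup_{\xi \in \mathcal{Z}_0}\mathbb{E}[\xi \ln \xi] - 1 + \sup_{t \in [0,T]} \mathbb{E}[\exp(a S_t^2)]\right) < \infty.
\end{align}

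The main step to get right is the choice of the correct elementary inequality linking the product $\xi S_t^2$ to the entropy term $\xi \ln \xi$ and an exponential moment. The cleanest route is to recall that for the convex conjugate pair $u(x) = -e^{-x}$, $v(y) = y\ln y - y$ the Fenchel inequality reads $-e^{-x} \le v(y) + xy$ for all $x \in \mathbb{R}$, $y \ge 0$; rearranging with $x \to -x$ gives $xy \le v(y) + e^{x} = y\ln y - y + e^{x}$. Applying this pointwise with $y = \xi$ and $x = a S_t^2$ and taking expectations yields exactly the bound above. One small point worth noting is that all terms appearing are nonnegative or controlled (the exponential moment is finite by assumption, the entropy is finite by hypothesis on $\mathcal{Z}_0$), so there are no integrability issues in taking expectations termwise, and the bound is genuinely uniform over both $\xi \in \mathcal{Z}_0$ and $t \in [0,T]$.

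I expect no serious obstacle here; this is a standard ``entropy controls moments against exponential-moment-integrable random variables'' estimate. The only thing to be careful about is that the constant $a$ from Assumption~\ref{integrability} is fixed (possibly small) and must be used as the coefficient $\beta$ in the Fenchel splitting, so that the resulting constant $\frac{1}{a}$ is finite but possibly large — this is harmless for the qualitative statement. If one wanted, the inequality $xy \le y \ln y - y + e^x$ could instead be invoked in the scaled form $xy \le y\ln y - y + e^{x}$ after replacing $x$ by $ax$ and $y$ by $y/a$ together with the identity $(y/a)\ln(y/a) = (1/a)(y\ln y - y\ln a)$; either bookkeeping works, and since $\sup_{\xi\in\mathcal Z_0}\mathbb E[\xi\ln\xi]<\infty$ absorbs the constant shift, the conclusion is unchanged.
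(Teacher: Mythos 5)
Your argument is correct: the pointwise Young/Fenchel inequality $xy\le y\ln y-y+e^{x}$ (valid for $y\ge 0$, $x\in\mathbb R$, as the Legendre pairing of $x\mapsto e^{x}$ with $y\mapsto y\ln y-y$) applied with $y=\xi$, $x=aS_t^2$ and integrated gives $a\,\mathbb E[\xi S_t^2]\le \mathbb E[\xi\ln\xi]-1+\mathbb E[e^{aS_t^2}]$, and since $\xi\ln\xi\ge -e^{-1}$ and $e^{aS_t^2}$ is integrable by Assumption~\ref{integrability}, taking expectations termwise is legitimate and the bound is uniform in $\xi\in\mathcal Z_0$ and $t\in[0,T]$. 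This is, however, a different route from the paper's: there the authors invoke H\"older's inequality in the conjugate Orlicz spaces generated by the Young functions $\Phi(x)=e^{x}-x-1$ and $\Psi(x)=(1+x)\ln(1+x)-x$ (citing Neveu), bounding $\mathbb E[\xi S_t^2]\le C\|\xi\|_{\Psi}\|S_t^2\|_{\Phi}$, with $\sup_t\|S_t^2\|_{\Phi}<\infty$ from Assumption~\ref{integrability} and $\sup_{\xi\in\mathcal Z_0}\|\xi\|_{\Psi}<\infty$ from the entropy bound. The two proofs rest on the same exponential--entropy duality, but yours works directly at the level of integrands, is self-contained (no Orlicz norm machinery or external reference), and yields an explicit constant $\frac1a\bigl(\sup_\xi\mathbb E[\xi\ln\xi]-1+\sup_t\mathbb E[e^{aS_t^2}]\bigr)$, whereas the paper's version packages the argument in norm form and leaves the step from the entropy bound to $\sup_\xi\|\xi\|_{\Psi}<\infty$ implicit; your bookkeeping remark about rescaling is unnecessary but harmless.
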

\begin{proof} Consider the conjugate Orlicz spaces corresponding to the Young
functions $\Phi(x)=e^{x}-x-1$ and $\Psi(x)=(1+x)\ln(1+x)-x$,
their respective norms being denoted by $||\cdot||_{\Phi}$, $||\cdot||_{\Psi}$, see
the Appendix of \cite{neveu} for definitions.
Assumption \ref{integrability} then implies $\sup_{t\in [0,T]}||S_{t}^{2}||_{\Phi}<\infty$.
Using Proposition A-2-2 of \cite{neveu}, we get that
\begin{equation}\label{usi}
\mathbb{E}[\xi S_{t}^{2}]\leq C ||\xi||_{\Psi}||S_{t}^{2}||_{\Phi}	
\end{equation}
with some constant $C$. The statement follows.
\end{proof}

\begin{lem}\label{finiteness} The functional
\begin{equation}\label{xixo}
\Xi(\xi):=\mathbb{E}[\xi\ln\xi]+\frac{1}{4}\mathbb{E}_{R(\xi)}\left[{}
\int_{0}^{T}(\mathbb{E}_{R(\xi),t}[S_{T}]-S_{t})^{2}\, dt\right]
\end{equation}
is finite and strictly convex on the convex set $\mathcal{Z}_{e}$.
\end{lem}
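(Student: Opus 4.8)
The plan is to prove the two assertions of Lemma~\ref{finiteness} separately: first finiteness of $\Xi$ on $\mathcal{Z}_e$, and then strict convexity.

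For \textbf{finiteness}, fix $\xi\in\mathcal{Z}_e$. By definition the entropy term $\mathbb{E}[\xi\ln\xi]$ is finite, so it suffices to control the integral term. Since the integrand is nonnegative, only an upper bound is needed. First I would note that $\mathbb{E}_{R(\xi),t}[S_T]-S_t = \mathbb{E}_{R(\xi),t}[S_T-S_t]$ and apply the conditional Jensen inequality to get
\begin{align}
\mathbb{E}_{R(\xi)}\left[\int_0^T(\mathbb{E}_{R(\xi),t}[S_T]-S_t)^2\,dt\right]
\le 2\int_0^T\mathbb{E}_{R(\xi)}\left[\mathbb{E}_{R(\xi),t}[S_T^2]+S_t^2\right]dt
= 2T\,\mathbb{E}_{R(\xi)}[S_T^2]+2\int_0^T\mathbb{E}_{R(\xi)}[S_t^2]\,dt.
\end{align}
By definition $\mathbb{E}_{R(\xi)}[S_t^2]=\mathbb{E}[\xi S_t^2]$, and since $\mathbb{E}[\xi\ln\xi]<\infty$ the singleton family $\{\xi\}$ satisfies the hypothesis of Lemma~\ref{moments}, giving $\sup_{t\in[0,T]}\mathbb{E}[\xi S_t^2]<\infty$. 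Hence the right-hand side is bounded by $4T\sup_{t\in[0,T]}\mathbb{E}[\xi S_t^2]<\infty$, so $\Xi(\xi)<\infty$. (Here one uses Assumption~\ref{integrability}, which underlies Lemma~\ref{moments}.)

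For \textbf{strict convexity}, the entropy functional $\xi\mapsto\mathbb{E}[\xi\ln\xi]$ is already strictly convex on $\mathcal{Z}_e$, so it suffices to show the integral term is convex (not necessarily strictly). The natural approach is to rewrite it in a manifestly convex form. For each $t$, the key observation is the variational identity
\begin{align}
\mathbb{E}_{R(\xi)}\left[(\mathbb{E}_{R(\xi),t}[S_T]-S_t)^2\right]
=\inf_{Y\in L^2(\mathcal{G}_t,R(\xi))}\mathbb{E}_{R(\xi)}\left[(S_T-S_t-Y)^2\right],
\end{align}
i.e.\ the conditional expectation is the $L^2$-projection. Replacing the optional projection by a pointwise $\mathcal{G}_t$-measurable choice, I would write the integral term as
\begin{align}
\frac14\,\mathbb{E}_{R(\xi)}\left[\int_0^T(S_T-S_t-\psi_t)^2\,dt\right]
=\frac14\,\mathbb{E}\left[\xi\int_0^T(S_T-S_t-\psi_t)^2\,dt\right]
\end{align}
minimized over $\mathcal{G}$-optional processes $\psi$ with $\mathbb{E}_{R(\xi)}[\int_0^T\psi_t^2\,dt]<\infty$. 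For fixed $\psi$, the map $\xi\mapsto\mathbb{E}[\xi\int_0^T(S_T-S_t-\psi_t)^2\,dt]$ is \emph{linear} in $\xi$; taking an infimum of linear functionals over $\psi$ yields a concave function, which is the wrong direction. So instead I would use the joint-convexity trick: consider the functional $(\xi,m)\mapsto\mathbb{E}[\xi\int_0^T(S_T-S_t-m_t/\xi)^2\,dt]$ in the pair $(\xi,m)$ where $m_t=\xi\psi_t$; a perspective-function argument shows $(\xi,q)\mapsto q^2/\xi$ is jointly convex, hence so is this functional, and $\Xi$ is the partial infimum over the $m$-variable, which preserves convexity. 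The main obstacle is making the measurable-selection and integrability bookkeeping in this projection/perspective argument rigorous — ensuring the infimum over $\psi$ is attained by the optional projection and that the perspective functional is lower semicontinuous in the right topology so that the partial-infimum operation genuinely yields a convex (not merely "convex on lines through attained minimizers") functional. Given the paper's stated policy that "only the main steps of the proofs will be given" and that this follows "a standard route" as in \cite{kabanov-stricker}, I expect the write-up to invoke these standard facts rather than reprove them.
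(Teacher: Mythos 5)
Your finiteness argument is fine and is exactly what the paper intends when it says finiteness ``follows from Lemma~\ref{moments}'' (the Jensen/tower bound plus the singleton application of Lemma~\ref{moments} is the standard filling-in), and your reduction ``entropy strictly convex $+$ integral term convex'' is also the paper's route. The gap is in the convexity step: your ``variational identity'' is false. Writing $X_t:=S_T-S_t$, the term appearing in $\Xi$ is $\mathbb{E}_{R(\xi)}\left[(\mathbb{E}_{R(\xi),t}[X_t])^2\right]$, the second moment of the conditional expectation, whereas
\begin{equation}
\inf_{Y\in L^2(\mathcal{G}_t,R(\xi))}\mathbb{E}_{R(\xi)}\left[(X_t-Y)^2\right]
=\mathbb{E}_{R(\xi)}\left[X_t^2\right]-\mathbb{E}_{R(\xi)}\left[(\mathbb{E}_{R(\xi),t}[X_t])^2\right]
\end{equation}
is the projection \emph{residual}; the two are not equal. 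The residual, being an infimum of $\xi$-linear functionals, is indeed concave in $\xi$ --- the ``wrong direction'' you noticed is a symptom of representing the wrong quantity, not of a bad parametrization. The perspective-function rescue inherits the same defect: the partial infimum of $\mathbb{E}[\xi\int_0^T(X_t-m_t/\xi)^2dt]$ over admissible $m$ is again the residual, not the term in $\Xi$; and in addition the admissibility constraint (that $m_t/\xi$ be $\mathcal{G}_t$-measurable) describes a set of pairs $(\xi,m)$ that depends on $\xi$ and is not convex, so the ``partial infimum of a jointly convex function'' principle does not apply as stated. As written, the proposal does not establish convexity of the second summand.

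The fix is short, and there are two routes. The paper's route: for each $t$ one has $\mathbb{E}_{R(\xi)}[(\mathbb{E}_{R(\xi),t}[X_t])^2]=\mathbb{E}\left[\mathbb{E}^2[\xi X_t|\mathcal{G}_t]/\mathbb{E}[\xi|\mathcal{G}_t]\right]$, and convexity in $\xi$ follows because $(u,v)\mapsto u^2/v$ is jointly convex on $\mathbb{R}\times(0,\infty)$ and $u=\mathbb{E}[\xi X_t|\mathcal{G}_t]$, $v=\mathbb{E}[\xi|\mathcal{G}_t]$ are linear in $\xi$; the paper verifies this via the equivalent elementary inequality $2abcd\le a^2d^2+b^2c^2$ and then integrates in $t$. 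If you prefer a variational argument in the spirit of your proposal, use a \emph{supremum} rather than an infimum:
\begin{equation}
\mathbb{E}_{R(\xi)}\left[(\mathbb{E}_{R(\xi),t}[X_t])^2\right]
=\sup_{\psi_t}\ \mathbb{E}\left[\xi\left(2\psi_t X_t-\psi_t^2\right)\right],
\end{equation}
the supremum taken over bounded $\mathcal{G}_t$-measurable $\psi_t$ (attained, when integrable, at $\psi_t=\mathbb{E}_{R(\xi),t}[X_t]$). This exhibits the term as a supremum of functionals linear in $\xi$ over an index set not depending on $\xi$, hence convex, with none of the measurable-selection or constraint-set issues; combined with strict convexity of $\xi\mapsto\mathbb{E}[\xi\ln\xi]$ this yields the lemma.
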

\begin{proof} Finiteness of the functional follows from Lemma
  \ref{moments}, convexity of the set $\mathcal{Z}_{e}$ is
  easy. Strict convexity of the first summand defining $\Xi$ is easy
  to see; convexity of the second part is somewhat more strenuous.
  The main ingredient is the following: for any random variable
  $X\in \cap_{\xi\in\mathcal{Z}_{e}}L^{1}(R(\xi))$ and sigma-algebra
  $\mathcal{H}$ the mapping
$$
\xi\in\mathcal{Z}_{e}\to \mathbb{E}_{R(\xi)}[(\mathbb{E}_{R(\xi)}[X\vert\mathcal{H}])^{2}]
$$
is convex. Indeed, for $s\in [0,1]$ and $\xi_{1},\xi_{2}\in\mathcal{Z}_{e}$
one should check that
\begin{eqnarray*}
& & \mathbb{E}_{R(s\xi_{1}+(1-s)\xi_{2})}[(\mathbb{E}_{R(s\xi_{1}+(1-s)\xi_{2})}[X\vert\mathcal{H}])^{2}]\\	
&=& \mathbb{E}\left[(s\xi_{1}+(1-s)\xi_{2})
\left(\frac{\mathbb{E}[(s\xi_{1}+(1-s)\xi_{2})X\vert\mathcal{H}]}{\mathbb{E}[s\xi_{1}+(1-s)\xi_{2}\vert\mathcal{H}]}\right)^{2}
\right]\\
&=& \mathbb{E}\left[\frac{\mathbb{E}^{2}[(s\xi_{1}+(1-s)\xi_{2})X\vert\mathcal{H}]}
{\mathbb{E}[s\xi_{1}+(1-s)\xi_{2}\vert\mathcal{H}]}\right]\\	
&\leq& \mathbb{E}\left[
\frac{s\mathbb{E}^{2}[\xi_{1}X\vert\mathcal{H}]}{\mathbb{E}[\xi_{1}\vert\mathcal{H}]}\right]+
\mathbb{E}\left[\frac{(1-s)\mathbb{E}^{2}[\xi_{2}X\vert\mathcal{H}]}{\mathbb{E}[\xi_{2}\vert\mathcal{H}]}\right],
\end{eqnarray*}
where, for simplicity, we assume $\xi_{1},\xi_{2}>0$ almost surely, the general case
being only notationally more difficult.
This is equivalent, by elementary calculations, to checking
\begin{eqnarray*}
& & 2\mathbb{E}[\xi_{1}X\vert\mathcal{H}]\mathbb{E}[\xi_{2}X\vert\mathcal{H}]\mathbb{E}[\xi_{1}\vert\mathcal{H}]\mathbb{E}[\xi_{2}\vert\mathcal{H}]
\\
&\leq&
\mathbb{E}^{2}[\xi_{1}X\vert\mathcal{H}]\mathbb{E}^{2}[\xi_{2}\vert\mathcal{H}]+
\mathbb{E}^{2}[\xi_{2}X\vert\mathcal{H}]\mathbb{E}^{2}[\xi_{1}\vert\mathcal{H}]
\end{eqnarray*}
which is clearly true. Now applying this observation with the choice $\mathcal{H}=\mathcal{G}_{t}$
and $X=S_{T}-S_{t}$ for each $t$ the statement follows easily.
\end{proof}

\begin{prop}\label{minimizer}
There exists a unique minimizer $\underline{\xi}$ of $\Xi$ on $\mathcal{Z}_{e}$
which is positive almost surely. We denote $\underline{J}:=\inf_{\xi\in\mathcal{Z}_{e}}\Xi(\xi)$.
\end{prop}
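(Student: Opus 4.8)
The plan is to establish existence and uniqueness by the direct method of the calculus of variations, and to obtain strict positivity by a one–sided perturbation of the candidate minimizer toward the constant density $\xi\equiv 1$.

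For existence I would first note $\underline{J}:=\inf_{\mathcal{Z}_e}\Xi\le\Xi(1)<\infty$ by Lemma~\ref{finiteness}, and pick a minimizing sequence $(\xi_n)\subset\mathcal{Z}_e$ with $\Xi(\xi_n)\downarrow\underline{J}$. Since the quadratic summand in~\eqref{xixo} is non-negative, $\sup_n\mathbb{E}[\xi_n\ln\xi_n]<\infty$, so superlinearity of $x\mapsto x\ln x$ together with de la Vallée–Poussin makes $(\xi_n)$ uniformly integrable; by Dunford–Pettis a subsequence converges weakly in $L^1(\mathbb{P})$ to some $\underline{\xi}$ with $\underline{\xi}\ge 0$ and $\mathbb{E}[\underline{\xi}]=1$, i.e.\ $\underline{\xi}\in\mathcal{Z}$. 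Extending $\Xi$ by $+\infty$ on $\mathcal{Z}\setminus\mathcal{Z}_e$, it then suffices to show $\Xi(\underline{\xi})\le\underline{J}$: by Mazur's lemma there are convex combinations $\zeta_n=\sum_{k\ge n}\lambda_{n,k}\xi_k$ converging to $\underline{\xi}$ in $L^1$, hence a.s.\ along a subsequence, and convexity of $\Xi$ gives $\Xi(\zeta_n)\le\sup_{k\ge n}\Xi(\xi_k)\to\underline{J}$, so only lower semicontinuity of $\Xi$ along $(\zeta_n)$ remains.

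For the entropy part this is Fatou's lemma applied to $\zeta_n\ln\zeta_n+1/e\ge 0$. For the quadratic part I would use the identity from the proof of Lemma~\ref{finiteness}, rewriting it as $\tfrac14\int_0^T\mathbb{E}\big[(\mathbb{E}[\xi(S_T-S_t)\,|\,\mathcal{G}_t])^2/\mathbb{E}[\xi\,|\,\mathcal{G}_t]\big]\,dt$; since $\sup_n\mathbb{E}[\zeta_n\ln\zeta_n]<\infty$, Lemma~\ref{moments} makes $(\zeta_n(S_T-S_t))_n$ uniformly integrable, so $\mathbb{E}[\zeta_n\,|\,\mathcal{G}_t]$ and $\mathbb{E}[\zeta_n(S_T-S_t)\,|\,\mathcal{G}_t]$ converge in $L^1$ and, up to a subsequence, a.s., and lower semicontinuity of the closed convex perspective function $(a,b)\mapsto a^2/b$ on $\mathbb{R}\times[0,\infty)$ together with Fatou's lemma — first in $\omega$, then in $t$ — yields the bound. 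Thus $\underline{\xi}\in\mathcal{Z}_e$ and $\Xi(\underline{\xi})=\underline{J}$, while uniqueness follows immediately from the strict convexity of $\Xi$ on $\mathcal{Z}_e$ recorded in Lemma~\ref{finiteness}.

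For strict positivity I would argue by contradiction: suppose $p:=\mathbb{P}(\underline{\xi}=0)>0$ and put $\xi_u:=(1-u)\underline{\xi}+u\in\mathcal{Z}_e$ for $u\in[0,1]$, which is strictly positive for $u>0$, so by optimality $u\mapsto\Xi(\xi_u)$ is minimized at $u=0$. I bound this increment from above. The quadratic summand $g(u)$ of $\Xi(\xi_u)$ is convex in $u$ — precisely the convexity established in Lemma~\ref{finiteness} — and finite at $u\in\{0,1\}$, hence $g(u)-g(0)\le u\,(g(1)-g(0))$. For the entropy summand, convexity of $x\mapsto x\ln x$ gives $\xi_u\ln\xi_u-\underline{\xi}\ln\underline{\xi}\le u\,(1+\ln\xi_u)(1-\underline{\xi})$, whose right-hand side is bounded above by the integrable constant $1$ and equals $u(1+\ln u)\to-\infty$ on $\{\underline{\xi}=0\}$; so reverse Fatou yields $\mathbb{E}[\xi_u\ln\xi_u]-\mathbb{E}[\underline{\xi}\ln\underline{\xi}]\le u\,c(u)$ with $c(u):=\mathbb{E}[(1+\ln\xi_u)(1-\underline{\xi})]\to-\infty$ as $u\downarrow 0$. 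Combining, $\Xi(\xi_u)-\Xi(\underline{\xi})\le u\big(c(u)+\tfrac14(g(1)-g(0))\big)<0$ for $u$ small, contradicting optimality; hence $\underline{\xi}>0$ a.s.

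The step I expect to be the main obstacle is the lower semicontinuity of the non-entropy summand: unlike the relative entropy it is not a sum of one-dimensional convex integrands, so one must pass to the ratio representation and carefully combine $L^1$-convergence of the conditional expectations $\mathbb{E}[\zeta_n\,|\,\mathcal{G}_t]$ and $\mathbb{E}[\zeta_n(S_T-S_t)\,|\,\mathcal{G}_t]$ — this is where the entropy bound and Lemma~\ref{moments} are needed — with Fatou's lemma in both $\omega$ and $t$. Everything else (compactness via de la Vallée–Poussin/Dunford–Pettis, uniqueness via strict convexity, and the perturbation for positivity) is routine once this is in place.
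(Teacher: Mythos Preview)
Your proposal is correct and follows essentially the same route as the paper: direct method on a minimizing sequence, pass to convex combinations, lower semicontinuity of the two summands via Fatou together with the entropy bound and Lemma~\ref{moments}, and a perturbation toward $\xi_0\equiv 1$ for strict positivity. The only cosmetic differences are that the paper invokes Koml\'os's theorem where you use Dunford--Pettis/Mazur, and it phrases lower semicontinuity of the quadratic term via convergence in probability of $\mathbb{E}_{R(\tilde\xi_n),t}[S_T]$ (proving the needed uniform integrability of $\tilde\xi_n S_T$ in detail) rather than via the perspective function $(a,b)\mapsto a^2/b$.
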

\begin{proof} Uniqueness of a minimizer is immediate from the strict
  convexity of $\Xi$. Let $\xi_{n}\in\mathcal{Z}_{e}$ be a minimizing
  sequence for $\Xi$. By the Koml\'os theorem, there exist
  C\'esaro-means of a subsequence (denoted by $\tilde{\xi}_{n}$)
  converging almost surely to some $\underline{\xi}$. As
  $\mathcal{Z}_{e}$ is convex,
  $\tilde{\xi}_{n}\in\mathcal{Z}_{e}$. Convexity of $\Xi$ implies that
\begin{equation}\label{mos}
\Xi(\tilde{\xi}_{n})\to\underline{J}	
\end{equation} still
holds. As $\sup_{n}\mathbb{E}[\tilde{\xi}_{n}|\ln\tilde{\xi}_{n}|]<\infty$ follows from \eqref{mos},
the de la Vall\'ee-Poussin criterion ensures that $\tilde{\xi}_{n}$ converges in $L^{1}$, too,
hence $\underline{\xi}\in\mathcal{Z}_{e}$.

We claim that
$\mathbb{E}_{R(\tilde{\xi}_{n}),t}[S_{T}]\to
\mathbb{E}_{R(\underline{\xi}),t}[S_{T}]$ in probability and observe
that, by Fatou's lemma and~\eqref{mos}, this entails
$\Xi(\underline{\xi})\leq \underline{J}$, establishing
$\underline{\xi}$ as a minimizer of $\Xi$ in $\mathcal Z_e$. Since
$\tilde{\xi}_{n}S_{T}\to\underline{\xi}S_{T}$ almost surely, the
claimed convergence will follow from the uniform integrability
condition
\begin{align}\label{eq:1015}
\sup_{n}\mathbb
  E\left[\tilde{\xi}_{n}|S_{T}|1_{\{\tilde{\xi}_{n}|S_{T}|\} \geq b}\right]
  \to 0 \text{ as } b \uparrow \infty.
\end{align}
For this we estimate
\begin{align}
  \label{eq:1012}
 \mathbb  E\left[\tilde{\xi}_{n}|S_{T}|
 1_{\{\tilde{\xi}_{n}|S_{T}|\geq b\}}
 \right] & \leq   \mathbb
           E\left[\tilde{\xi}_{n}|S_{T}|^2\right]^{1/2}\mathbb E\left[\tilde{\xi}_n
           1_{\{\tilde{\xi}_{n}|S_{T}|\geq b\}}
\right]^{1/2} 
\end{align}
The first of the above factors is bounded uniformly in $n$ due to
Lemma~\ref{moments}. We will get our assertion~\eqref{eq:1015} by
showing that the second factor vanishes uniformly in $n$ as
$b \uparrow\infty$. For this we estimate it further:
\begin{align}
  \label{eq:14}
  \mathbb E\left[\tilde{\xi}_n1_{\{\tilde{\xi}_{n}|S_{T}|\geq
  b\}}\right]
  &\leq   \mathbb
    E\left[\tilde{\xi}_n1_{\{\ln(\tilde{\xi}_{n}(|S_{T}|\vee 1))\geq
   \ln b\}}\right]
   \leq   \mathbb
    E\left[\tilde{\xi}_n\frac{\ln^+(\tilde{\xi}_{n}(|S_{T}|\vee 1))}{\ln
    b}\right]\\&\leq\frac{1}{\ln b}\left(\mathbb
    E\left[\tilde{\xi}_n\ln^+\tilde{\xi}_{n}\right]+\mathbb
    E\left[\tilde{\xi}_n\ln^+\left(|S_{T}|\vee 1\right)\right]\right).
\end{align}
The first of these last two expectations is bounded uniformly due
to~\eqref{mos}, the second because of this in conjunction with
Lemma~\ref{moments}, and we can conclude.

To prove positivity of $\underline{\xi}$, let us pick a strictly positive
$\xi_{0}\in\mathcal{Z}_{e}$ (for instance $\xi_{0}\equiv 1$) and
define
\begin{equation}\label{ef}
F_{s}:=\Xi(s\xi_{0}+(1-s)\underline{\xi}),\ s\in [0,1].	
\end{equation}
By optimality of $\underline{\xi}$ the right derivative $F_{0+}'$ is non-negative.
If we had $P(\underline{\xi}=0)>0$ then we would reach a contradiction just like in Proposition 3.1 of
\cite{kabanov-stricker}.	
\end{proof}

\begin{thm}\label{maximizer}
There exists a strategy
$\phi^{\dagger}$ such that $V(\phi^{\dagger})=\underline{J}-\ln\underline{\xi}$.
Moreover,
\begin{equation}\label{primalproblem}
\sup_{\phi\in\mathcal{A}}\mathbb{E}[-\exp(-V(\phi))]=\mathbb{E}[-\exp(-V(\phi^{\dagger}))].
\end{equation}
\end{thm}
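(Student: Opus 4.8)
The plan is to exhibit a candidate optimal strategy directly from the dual minimizer $\underline{\xi}$ and then use the Fenchel inequality $u(x)\le v(y)+xy$ together with the superhedging Theorem~\ref{superhedging} to verify its optimality. Concretely, set $\hat W:=\underline{J}-\ln\underline{\xi}$. First I would check that $\hat W$ is superhedgeable, i.e.\ that it satisfies hypothesis~\eqref{igaz} of Theorem~\ref{superhedging} with $\mathbb{Q}=R(\underline{\xi})$; note that~\eqref{hol} for this $\mathbb{Q}$ follows from Lemma~\ref{moments}. The inequality~\eqref{igaz} is precisely a first-order optimality condition for $\underline{\xi}$: for any $\xi\in\mathcal{Z}_e$ with bounded density ratio $\xi/\underline{\xi}$, differentiating $s\mapsto\Xi((1-s)\underline{\xi}+s\xi)$ at $s=0^+$ and using $\Xi'_{0+}\ge 0$ (optimality of $\underline{\xi}$ on the convex set $\mathcal{Z}_e$) yields, after computing the derivative of both the entropy term and the conditional-variance term, exactly $\mathbb{E}[\xi\hat W]\le \frac14\mathbb{E}_{R(\xi)}[\int_0^T(\mathbb{E}_{R(\xi),t}[S_T]-S_t)^2\,dt]$ for all such $\xi$, hence~\eqref{igaz} for all $R\ll\mathbb{Q}$ with bounded density. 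Theorem~\ref{superhedging} then provides $\phi^\dagger\in\mathcal{A}$ with $V(\phi^\dagger)\ge \hat W=\underline{J}-\ln\underline{\xi}$ a.s.

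Next I would prove the two matching inequalities in~\eqref{primalproblem}. For the lower bound, using $V(\phi^\dagger)\ge\underline{J}-\ln\underline{\xi}$ and monotonicity of $u$ gives $\mathbb{E}[-\exp(-V(\phi^\dagger))]\ge -e^{-\underline{J}}\,\mathbb{E}[\underline{\xi}]=-e^{-\underline{J}}$; in particular the supremum in~\eqref{primalproblem} is $\ge -e^{-\underline{J}}$. For the upper bound, take any $\phi\in\mathcal{A}$ and any $\xi\in\mathcal{Z}_e$ and apply the Fenchel inequality $u(V(\phi))\le v(\xi)+\xi V(\phi)$ with $v(y)=y\ln y-y$; taking $\mathbb{E}[\cdot]$ gives $\mathbb{E}[u(V(\phi))]\le \mathbb{E}[\xi\ln\xi]-1+\mathbb{E}[\xi V(\phi)]$. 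The term $\mathbb{E}[\xi V(\phi)]=\mathbb{E}_{R(\xi)}[V(\phi)]$ is then bounded, by the ``easy'' (duality) estimate underlying~\eqref{igaz} — expanding $V(\phi)$, writing the stochastic-integral-type term $\int_0^T(S_T-S_t)\phi_t\,dt$ under $R(\xi)$ via conditioning on $\mathcal{G}_t$, and completing the square in $\phi_t$ — by $\tfrac14\mathbb{E}_{R(\xi)}[\int_0^T(\mathbb{E}_{R(\xi),t}[S_T]-S_t)^2\,dt]+1$. Combining, $\mathbb{E}[u(V(\phi))]\le \Xi(\xi)$ for every $\xi\in\mathcal{Z}_e$, hence $\sup_{\phi\in\mathcal{A}}\mathbb{E}[u(V(\phi))]\le \inf_{\xi\in\mathcal{Z}_e}\Xi(\xi)=\underline{J}$, and also $\le -e^{-\underline J}$ after exponentiating; here one must be mildly careful with integrability (truncation of $\phi$ or of $\xi$) but this is routine given Assumption~\ref{integrability} and Lemma~\ref{moments}.

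Finally I would close the loop: the lower bound gives $\sup_\phi\mathbb{E}[u(V(\phi))]\ge -e^{-\underline J}$, the Fenchel bound gives $\le -e^{-\underline J}$, so equality holds and is attained at $\phi^\dagger$; moreover $V(\phi^\dagger)\ge\underline{J}-\ln\underline{\xi}$ together with $\mathbb{E}[-\exp(-V(\phi^\dagger))]=-e^{-\underline J}=-\mathbb{E}[\exp(-(\underline J-\ln\underline{\xi}))]$ forces $V(\phi^\dagger)=\underline{J}-\ln\underline{\xi}$ a.s., since a nonpositive random variable with integral zero vanishes. I expect the main obstacle to be the first step — verifying that the variational inequality $\Xi'_{0+}\ge 0$ really translates into~\eqref{igaz} — because one has to differentiate the conditional-variance functional $\xi\mapsto\mathbb{E}_{R(\xi)}[\int_0^T(\mathbb{E}_{R(\xi),t}[S_T]-S_t)^2\,dt]$ along the segment $(1-s)\underline{\xi}+s\xi$, which involves the derivative of the conditional expectation $\mathbb{E}_{R(\xi),t}[S_T]$ in $s$ and requires the boundedness of $\xi/\underline{\xi}$ (and the moment bounds from Lemma~\ref{moments}) to justify interchanging differentiation with the expectations; this is exactly the Kabanov–Stricker-type argument referenced after Proposition~\ref{minimizer}, transplanted to the present quadratic-cost functional.
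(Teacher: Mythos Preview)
Your overall strategy matches the paper's: derive weak duality via Fenchel, use the first-order condition at $\underline{\xi}$ to verify the superhedging hypothesis~\eqref{igaz}, and invoke Theorem~\ref{superhedging}. There is, however, a genuine gap in your upper-bound step. Applying Fenchel with $y=\xi$ gives
\[
\mathbb{E}[u(V(\phi))]\le \mathbb{E}[\xi\ln\xi]-1+\mathbb{E}_{R(\xi)}[V(\phi)],
\]
and completing the square yields $\mathbb{E}_{R(\xi)}[V(\phi)]\le\tfrac14\mathbb{E}_{R(\xi)}\bigl[\int_0^T(\mathbb{E}_{R(\xi),t}[S_T]-S_t)^2\,dt\bigr]$ with \emph{no} ``$+1$''. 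So the best you obtain is $\mathbb{E}[u(V(\phi))]\le\Xi(\xi)-1$, hence $\sup_\phi\mathbb{E}[u(V(\phi))]\le\underline{J}-1$. You cannot pass from this to $\le -e^{-\underline{J}}$ by ``exponentiating'': the inequality $-e^{-x}\le x-1$ holds for all $x$, so your upper and lower bounds are consistent but leave a strict gap whenever $\underline{J}\neq 0$. The missing ingredient, used in the paper, is to apply Fenchel with $y=s\xi$ for a free scalar $s>0$, giving $\mathbb{E}[u(V(\phi))]\le s\ln s - s + s\,\Xi(\xi)$, and then minimize over $s$ to reach the correct exponential bound $\mathbb{E}[u(V(\phi))]\le -e^{-\Xi(\xi)}$.

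A secondary caution on the step you already flagged: the directional derivative of the conditional-variance part of $\Xi$ at $\underline{\xi}$ in the direction $\xi-\underline{\xi}$ is linear in $\xi$ but involves only $R(\underline{\xi})$-conditional expectations of $X_t:=S_T-S_t$; it does \emph{not} produce $\mathbb{E}_{R(\xi),t}[S_T]$ directly. The paper inserts an AM--GM/Cauchy--Schwarz estimate on the cross term
\[
2\,\mathbb{E}\!\left[\frac{\mathbb{E}_t[\underline{\xi}X_t]\,\mathbb{E}_t[\xi X_t]}{\mathbb{E}_t[\underline{\xi}]}\right]
\le
\mathbb{E}\!\left[\frac{\mathbb{E}_t^2[\underline{\xi}X_t]\,\mathbb{E}_t[\xi]}{\mathbb{E}_t^2[\underline{\xi}]}\right]
+
\mathbb{E}\!\left[\frac{\mathbb{E}_t^2[\xi X_t]}{\mathbb{E}_t[\xi]}\right]
\]
to convert $F'_{0+}\ge0$ into~\eqref{igaz}. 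So the variational inequality does not ``yield exactly'' \eqref{igaz}; an additional, though elementary, inequality is required.
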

\begin{proof}
Notice that $V(\phi)\leq Q:=\frac{1}{4}\int_{0}^{T}(S_{T}-S_{t})^{2}\, dt$ for all $\phi$,{}
and $Q$ is $R(\xi)$-integrable for all $\xi\in\mathcal{Z}_{e}$, by the arguments
of Lemma \ref{moments}.
Let us fix now an arbitrary $\phi\in\mathcal{A}$ and $\xi\in\mathcal{Z}_{e}$.
For each $s>0$, we apply the Fenchel inequality and Fubini's theorem, remembering the
integrability of the quantity $Q$:
\begin{eqnarray}
& & \mathbb{E}[-\exp(-V(\phi))]\nonumber \\
&\leq& \mathbb{E}[s\xi\ln (s\xi)]-\mathbb{E}[s\xi]+\mathbb{E}\left[s\xi
\left(-\int_{0}^{T}S_{t}\phi_{t}\, dt-
\int_{0}^{T}\phi_{t}^{2}\, dt+\int_{0}^{T}S_{T}\phi_{t}\, dt\right)\right]\nonumber \\
&=& \mathbb{E}[s\xi\ln (s\xi)]-s+
s\int_{0}^{T}\mathbb{E}_{R(\xi)}\left[-S_{t}\phi_{t}-
\phi_{t}^{2}+S_{T}\phi_{t}\right]\, dt\nonumber \\
&\leq& \mathbb{E}[s\xi\ln (s\xi)]-s+
s\mathbb{E}\left[\xi\int_{0}^{T}\left(-S_{t}\phi_{t}-\phi_{t}^{2}+
\mathbb{E}_{R(\xi),t}[S_{T}]\phi_{t}\right)dt
\right]\nonumber
\\
&\leq& s\ln s + s\mathbb{E}[\xi\ln \xi]-s+
\frac{s}{4}\mathbb{E}\left[\xi\int_{0}^{T}(\mathbb{E}_{R(\xi),t}[S_{T}]-S_{t})^{2}\, dt\right].
\end{eqnarray}

Optimizing in $s$ we arrive at
\begin{eqnarray*}
\mathbb{E}[-\exp(-V(\phi))] \leq -\exp\left(-\mathbb{E}[\xi\ln \xi]-
\frac{1}{4}\mathbb{E}\left[\xi\int_{0}^{T}(\mathbb{E}_{R(\xi),t}[S_{T}]-S_{t})^{2}\, dt\right]\right).	
\end{eqnarray*}


Choosing $\xi:=\underline{\xi}$,
\begin{equation}\label{csirmi}
\mathbb{E}[-\exp(-V(\phi))]\leq -e^{-\underline{J}}.	
\end{equation}

Now we will prove that a suitable strategy $\phi^{\dagger}$ attains the bound given in \eqref{csirmi}.
Let $\xi\in\mathcal{Z}$ be such that $\xi\leq C_{0}\underline{\xi}$ for
some $C_{0}>0$. Since $\underline{\xi}|\ln\underline{\xi}|\leq C_{0}\underline{\xi}
(|\ln\underline{\xi}|+|\ln C_{0}|)$,
we have, in fact, $\xi\in\mathcal{Z}_{e}$.
Consider the function \begin{equation}
F_{s}:=\Xi(s\xi+(1-s)\underline{\xi}),\ s\in [0,1].	
\end{equation}
Since $\underline{\xi}$ is the minimizer, $F_{s}\geq F_{0}$ for $s\in [0,1]$
so the right-hand derivative satisfies
\begin{equation}\label{copacabana}
F_{0+}'\geq 0.	
\end{equation}
To simplify notation,
we will write $X_{t}:=S_{T}-S_{t}$ henceforth.
Let us calculate the latter derivative now. It equals
$$
\mathbb{E}[(\underline{\xi}-{\xi})\ln(\underline{\xi})]+
\frac{1}{4}\int_{0}^{T}\mathbb{E}\left[\frac{\mathbb{E}_{t}[\underline{\xi}]2\mathbb{E}_{t}[\underline{\xi}X_{t}]
\mathbb{E}_{t}[(\xi-\underline{\xi})X_{t}]-\mathbb{E}_{t}[\xi-\underline{\xi}]\mathbb{E}^{2}_{t}[\underline{\xi}X_{t}]}
{\mathbb{E}_{t}^{2}[\underline{\xi}]}\right] dt.
$$
Grouping the terms inside the integral that do not contain $\xi$ we obtain
$$
-\mathbb{E}_{R(\underline{\xi})}\left[\frac{\mathbb{E}^{2}_{t}[\underline{\xi}X_{t}]}{\mathbb{E}^{2}_{t}[\underline{\xi}]}\right].
$$
The rest of the integrand is
$$
\mathbb{E}\left[{}
2\frac{\mathbb{E}_{t}[\underline{\xi}X_{t}]\mathbb{E}_{t}[\xi X_{t}]}{\mathbb{E}_{t}[\underline{\xi}]}\right]+
\frac{-\mathbb{E}_{t}[\xi]\mathbb{E}^{2}_{t}[\underline{\xi}X_{t}]}{\mathbb{E}^{2}_{t}[\underline{\xi}]}.
$$
The first term of the latter expression can be rewritten and estimated by Cauchy's inequality:
$$
2\mathbb{E}\left[\frac{\mathbb{E}_{t}[\underline{\xi}X_{t}]\mathbb{E}_{t}[\xi X_{t}]\mathbb{E}_{t}[\xi]}{\mathbb{E}_{t}[\underline{\xi}]\mathbb{E}_{t}[\xi]}\right]
\leq \mathbb{E}\left[\frac{\mathbb{E}_{t}^{2}[\underline{\xi}X_{t}]\mathbb{E}_{t}[\xi]}{\mathbb{E}_{t}^{2}[\underline{\xi}]}\right]
+ \mathbb{E}\left[\frac{\mathbb{E}_{t}^{2}[{\xi}X_{t}]\mathbb{E}_{t}[\xi]}{\mathbb{E}_{t}^{2}[{\xi}]}\right].
$$
Taking all terms into consideration, condition \eqref{copacabana} eventually implies
\begin{eqnarray}\nonumber & &
\mathbb{E}[-\xi\ln(\underline{\xi})]\\
&\leq&
\mathbb{E}[-\underline{\xi}\ln(\underline{\xi})]
-\frac{1}{4}\mathbb{E}_{R(\underline{\xi})}\left[\int_{0}^{T}\frac{\mathbb{E}^{2}_{t}[\underline{\xi}X_{t}]}{\mathbb{E}^{2}_{t}[\underline{\xi}]}\, dt\right]
+\frac{1}{4}\mathbb{E}_{R({\xi})}\left[\int_{0}^{T}\frac{\mathbb{E}^{2}_{t}[{\xi}X_{t}]}{\mathbb{E}^{2}_{t}[{\xi}]}\, dt\right].{}
\label{haromcsillag}
\end{eqnarray}
We now use Theorem \ref{superhedging} with the choice $Q=R(\underline{\xi})$ and
$R=R(\xi)$. We obtain from \eqref{haromcsillag} and \eqref{igaz} that
$-\ln\underline{\xi}+\underline{J}\leq V(\phi^{\dagger})$ for some $\phi^{\dagger}\in\mathcal{A}$.
Since
$$
\mathbb{E}[-\exp(-V(\phi^{\dagger}))]\geq \mathbb{E}[-\exp(\ln\underline{\xi}-\underline{J})]=-e^{-\underline{J}}\mathbb{E}[\underline{\xi}]=
-e^{-\underline{J}},
$$
$\phi^{\dagger}$ is indeed an optimal strategy. Note also that,
by \eqref{csirmi}, the above inequality must be an a.s.\ equality so
$\ln\underline{\xi}=\underline{J}-V(\phi^{\dagger})$.
\end{proof}

\begin{cor}\label{biden}
The strategy
$$
\hat{\phi}_{t}:=\frac{\mathbb{E}_{R(\underline{\xi}),t}[S_{T}]-S_{t}}{2},\ t\in [0,T]
$$
is optimal for the problem \eqref{primalproblem}.	
\end{cor}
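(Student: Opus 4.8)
The plan is to read off the first-order optimality condition from the proof of Theorem~\ref{maximizer} and to identify the strategy $\phi^{\dagger}$ produced there with $\hat{\phi}$. Recall that Theorem~\ref{maximizer} already furnishes a $\phi^{\dagger}\in\mathcal{A}$ that is optimal for~\eqref{primalproblem}, with $V(\phi^{\dagger})=\underline{J}-\ln\underline{\xi}$ almost surely and hence with optimal value $\mathbb{E}[-\exp(-V(\phi^{\dagger}))]=-e^{-\underline{J}}$. It therefore suffices to show that $\hat{\phi}=\phi^{\dagger}$ for $dt\otimes\mathbb{P}$-almost every $(t,\omega)$: this gives $\hat{\phi}\in\mathcal{A}$ and $V(\hat{\phi})=V(\phi^{\dagger})$ a.s., so that $\mathbb{E}[-\exp(-V(\hat{\phi}))]=\mathbb{E}[-\exp(-V(\phi^{\dagger}))]=\sup_{\phi\in\mathcal{A}}\mathbb{E}[-\exp(-V(\phi))]$ by~\eqref{primalproblem}, i.e.\ $\hat{\phi}$ is optimal.

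To obtain $\hat{\phi}=\phi^{\dagger}$ I would revisit the chain of inequalities in the proof of Theorem~\ref{maximizer}, specialised to $\phi=\phi^{\dagger}$, $\xi=\underline{\xi}$ and the minimising value $s^{*}:=e^{-\underline{J}}$ of the free parameter $s$. For this choice the last line of that chain equals $s^{*}\ln s^{*}+s^{*}\,\Xi(\underline{\xi})-s^{*}=-e^{-\underline{J}}$ (using $\Xi(\underline{\xi})=\underline{J}$, which is finite by Lemma~\ref{finiteness}), so it coincides with $\mathbb{E}[-\exp(-V(\phi^{\dagger}))]$. Consequently every inequality in the chain must be an equality. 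The decisive one is the step in which, after integration in $t$, the integrand $(\mathbb{E}_{R(\underline{\xi}),t}[S_{T}]-S_{t})\phi^{\dagger}_{t}-(\phi^{\dagger}_{t})^{2}$ was bounded by $\tfrac14(\mathbb{E}_{R(\underline{\xi}),t}[S_{T}]-S_{t})^{2}$; writing $a_{t}:=\mathbb{E}_{R(\underline{\xi}),t}[S_{T}]-S_{t}$ and using the identity $\tfrac14 a^{2}-a\phi+\phi^{2}=(\phi-\tfrac{a}{2})^{2}$, this equality takes the form
$$
s^{*}\,\mathbb{E}_{R(\underline{\xi})}\!\left[\int_{0}^{T}\Big(\phi^{\dagger}_{t}-\tfrac{a_{t}}{2}\Big)^{2}\,dt\right]=0 .
$$
Since $s^{*}>0$ and $\underline{\xi}>0$ almost surely (Proposition~\ref{minimizer}), this forces $\phi^{\dagger}_{t}=\tfrac12\big(\mathbb{E}_{R(\underline{\xi}),t}[S_{T}]-S_{t}\big)=\hat{\phi}_{t}$ for $dt\otimes\mathbb{P}$-almost every $(t,\omega)$, which is exactly the required identification. (That $\hat{\phi}$ itself is admissible follows along the way: taking the c\`adl\`ag modification of the $R(\underline{\xi})$-martingale $t\mapsto\mathbb{E}_{R(\underline{\xi}),t}[S_{T}]$ makes $\hat{\phi}$ optional, and $\int_{0}^{T}\hat{\phi}_{t}^{2}\,dt=\int_{0}^{T}(\phi^{\dagger}_{t})^{2}\,dt<\infty$ a.s.)

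The only step that is not pure bookkeeping is recognising that the optimality of $\phi^{\dagger}$ propagates backwards through each link of the inequality chain, and that the link stemming from the pointwise-in-$(t,\omega)$ maximisation of the concave quadratic $\phi\mapsto a_{t}\phi-\phi^{2}$ is precisely the one that pins the optimiser down to $\hat{\phi}$; granting this, the corollary is immediate.
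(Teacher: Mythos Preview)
Your proposal is correct and follows essentially the same route as the paper: both arguments specialise the Fenchel--quadratic chain from the proof of Theorem~\ref{maximizer} to $\phi=\phi^{\dagger}$, $\xi=\underline{\xi}$, $s=s^{*}=e^{-\underline{J}}$, observe that the endpoints coincide (both equal $-e^{-\underline{J}}$), and hence that every intermediate inequality is an equality. The paper records the identity $\tfrac{1}{4}\mathbb{E}_{R(\underline{\xi})}\!\big[\int_0^T(\mathbb{E}_{R(\underline{\xi}),t}[S_T]-S_t)^2\,dt\big]=\mathbb{E}_{R(\underline{\xi})}[V(\hat{\phi})]$ and then simply says ``This implies that $\hat{\phi}$ is an optimal strategy''; your write-up makes explicit the step the paper leaves to the reader, namely that the saturation of the pointwise quadratic bound $a_t\phi-\phi^2\le\tfrac14 a_t^2$ under $R(\underline{\xi})$ forces $\phi^{\dagger}_t=\tfrac12 a_t=\hat{\phi}_t$ for $dt\otimes\mathbb{P}$-a.e.\ $(t,\omega)$, whence $V(\hat{\phi})=V(\phi^{\dagger})$ and optimality follows.
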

\begin{proof}
Indeed, the arguments of the previous theorem show that,
for all $\phi\in\mathcal{A}$, $\xi\in\mathcal{Z}_{e}$ and $s>0$,
\begin{eqnarray*}& &
\mathbb{E}[-\exp(-V(\phi))]\\
&\leq&
s\ln s + s\mathbb{E}[\xi\ln \xi]-s+
\frac{s}{4}\mathbb{E}\left[\xi\int_{0}^{T}(\mathbb{E}_{R(\xi),t}[S_{T}]-S_{t})^{2}\right]\\
&=& s\ln s + s\mathbb{E}[\xi\ln \xi]-s+
s\mathbb{E}\left[\xi V(\hat{\phi})\right]
\\ &\leq& -\exp\left(-\mathbb{E}[\xi\ln \xi]-
\frac{1}{4}\mathbb{E}\left[\xi\int_{0}^{T}(\mathbb{E}_{R(\xi),t}[S_{T}]-S_{t})^{2}\right]\right),	
\end{eqnarray*}
with equality for $\phi=\phi^{\dagger}$, $\xi=\underline{\xi}$ and for a suitable $s=s^{*}$.
This implies that $\hat{\phi}$ is an optimal strategy.
\end{proof}

\begin{proof}[Proof of Proposition \ref{propapp}]
Proposition \ref{minimizer} and Theorem \ref{maximizer} establish that the optimal
portfolio wealth $V(\phi^{\dagger})$ for the (primal) utility maximization problem
\eqref{primalproblem} can be found by first finding the (dual) minimizer $\underline{\xi}$ of the functional
\eqref{xixo} and then taking $\phi^{\dagger}$ satisfying $V(\phi^{\dagger})=\underline{J}-\ln\underline{\xi}$.
Finally, from the strict concavity of the map $\phi\rightarrow -\exp(-V(\phi))$ we conclude that
the strategy
$\hat{\phi}$ which is given in Corollary~\ref{biden} is the unique optimal strategy.
\end{proof}


\bibliography{finance}{}
\bibliographystyle{abbrv}
\end{document}